\def\longversion{1}
\renewenvironment{proof}{{\noindent \itshape Proof.}}{\qed\vspace{\baselineskip}}
\newcommand{\F}{\mathbb{F}}
\newcommand{\Fq}{\mathbb{F}_{q}}
\newcommand{\Fqm}{\mathbb{F}_{q^m}}
\newcommand{\Fqmu}{\mathbb{F}_{q^{mu}}}
\newcommand{\pk}{\mathbf{pk}}
\newcommand{\rk}[1][q]{\mathbf{rank}_{#1}}
\newcommand{\sk}{\mathbf{sk}}
\newcommand{\tr}{{\rm Tr}}
\newcommand{\Tr}[1][m]{{\rm Tr}_{q^{#1u}/q^{#1}}}
\newcommand{\Prob}{\mathbb{P}}
\newcommand{\gbinom}[2]{\begin{bmatrix}#1 \\ #2
  \end{bmatrix}}
\newcommand{\Stab}{{\rm Stab}}
\newcommand{\img}{{\rm Im}}
\renewcommand{\ker}{{\rm Ker}}
\newcommand{\word}[1]{\ensuremath{\boldsymbol{#1}}}
\newcommand{\bfa}{\word{a}}
\newcommand{\bfc}{\word{c}}
\newcommand{\bfe}{\word{e}}
\newcommand{\bfg}{\word{g}}
\newcommand{\bfk}{\word{k}}
\newcommand{\bfm}{\word{m}}
\newcommand{\bfs}{\word{s}}
\newcommand{\bfu}{\word{u}}
\newcommand{\bfx}{\word{x}}
\newcommand{\bfy}{\word{y}}
\newcommand{\bfz}{\word{z}}
\newcommand{\mat}[1]{\ensuremath{\boldsymbol{#1}}}
\newcommand{\bfA}{\mat{A}}
\newcommand{\bfB}{\mat{B}}
\newcommand{\bfC}{\mat{C}}
\newcommand{\bfE}{\mat{E}}
\newcommand{\bfG}{\mat{G}}
\newcommand{\bfH}{\mat{H}}
\newcommand{\bfM}{\mat{M}}
\newcommand{\bfP}{\mat{P}}
\newcommand{\bfS}{\mat{S}}
\newcommand{\bfT}{\mat{T}}
\newcommand{\bfY}{\mat{Y}}
\newcommand{\bigzero}{\large\mat{0}}
\newcommand{\kpub}{\word{k}_{pub}}
\newcommand{\kpriv}{\bfk_{{\rm priv}}}
\newcommand{\Ksec}{K_{\textrm{sec}}}
\newcommand{\sample}{\xleftarrow{\$}}
\newcommand{\code}[1]{\mathscr{#1}}
\newcommand{\BB}{\code{B}}
\newcommand{\CC}{\code{C}}
\newcommand{\EE}{\code{E}}
\newcommand{\FF}{\code{F}}
\newcommand{\GG}{\code{G}}
\newcommand{\MM}{\code{M}}
\newcommand{\TT}{\code{T}}
\newcommand{\Cpub}{\CC_{\rm pub}}
\newcommand{\Gab}[2]{\code{G}_{#1}(#2)}
\newcommand{\Frob}[2]{{#1}^{[#2]}}
\newcommand{\ext}[2]{\textrm{Ext}_{#2}(#1)}
\newcommand{\Supp}{\textrm{Supp}}
\newcommand{\RSupp}{\textrm{RowSupp}}
\newcommand{\Mspace}[2]{\mathcal{M}_{#1}(#2)}
\newcommand{\GL}[2]{\textrm{GL}_{#1}(#2)}
\newcommand{\qpoly}[1]{\mathcal{L}_{#1}}
\newcommand\smvee{\hbox{$\scriptscriptstyle\vee$}}
\newcommand{\adj}[1]{{#1}^{\raise0.9ex\smvee}}
\newcommand{\wadj}[1]{{#1}^{\smvee}} \newcommand{\qdeg}{\deg_q}
\newcommand{\eqdef}{\stackrel{\textrm{def}}{=}}
\renewcommand{\span}{\mathbf{Span}}
\renewcommand{\leq}{\leqslant}
\renewcommand{\le}{\leqslant}
\renewcommand{\ge}{\geqslant}
\newcommand{\map}[4]{\left\{
    \begin{array}{ccc}
      #1 & \longrightarrow & #2\\
      #3 & \longmapsto & #4
    \end{array}\right.
}
\newcommand{\ie}{{\em i.e. }}
\newcommand{\LIGA}{{\sc Liga}}
\newcommand{\RAMESSES}{{\sc Ramesses}}
 \author{Maxime Bombar\inst{1,2}\textsuperscript{({\tiny\Letter}) } \and Alain Couvreur \inst{2,1}}
\institute{LIX, CNRS UMR 7161, \'Ecole Polytechnique,\\
  Institut Polytechnique de Paris,\\
1 rue Honor\'e d'Estienne d'Orves\\
91120 {\sc Palaiseau Cedex} \and
Inria\\
\email{\{maxime.bombar, alain.couvreur\}@inria.fr}
}
\begin{document}

\title{Decoding Supercodes of Gabidulin Codes and
  Applications to Cryptanalysis}

\maketitle

\begin{abstract}
  This article discusses the decoding of Gabidulin codes and shows how
  to extend the usual decoder to any supercode of a Gabidulin code at
  the cost of a significant decrease of the decoding radius. Using
  this decoder, we provide polynomial time attacks on the rank metric
  encryption schemes \RAMESSES{} and \LIGA{}.

  \keywords{Code--Based Cryptography \and Gabidulin Codes \and Decoding \and
  Rank Metric \and Cryptanalysis}
\end{abstract}

 \section*{Introduction}
It is well--known that error correcting codes lie among the possible
candidates for post quantum cryptographic primitives.  For codes
endowed with the Hamming metric the story begins in 1978 with
McEliece's proposal \cite{M78}.  The security of this scheme relies on
two difficult problems: the hardness of distinguishing classical Goppa
codes from arbitrary codes and the hardness of the syndrome decoding
problem.  To instantiate McEliece scheme, the only requirement is to
have a family of codes whose structure can be hidden and benefiting
from an efficient decoder. In particular, this paradigm does not
require the use of codes endowed with the Hamming metric.
Hence other metrics may be considered such as the rank metric,
as proposed by Gabidulin, Paramonov and Tretjakov in \cite{GPT91}.

Besides McEliece's paradigm, another approach to perform encryption
with error correcting codes consists in using codes whose structure is
no longer hidden but where encryption is performed so that decryption
without the knowledge of the secret key would require to decode the
public code far beyond the decoding radius. This principle has been
first instantiated in Hamming metric by Augot and Finiasz in
\cite{AF03} using Reed--Solomon codes. Later, a rank metric
counterpart is designed by Faure and Loidreau in \cite{FL05}.  Both
proposals have been subject to attacks, by Coron \cite{C03} for the
Hamming metric proposal and by Gaborit, Otmani and Tal\'e--Kalachi
\cite{GOT18} for the rank metric one. More recently, two independent
and different repairs of Faure--Loidreau scheme resisting to the
attack of Gaborit, Otmani and Tal\'e--Kalachi appeared. The first one,
due to Renner, Puchinger and Wachter--Zeh and is called \LIGA{}
\cite{WPR18,RPW20}. The second one, due to Lavauzelle, Loidreau and
Pham is called \RAMESSES{} \cite{LLP20}.

\if\longversion1
\subsection*{Our contribution}
\else
\subsubsection{Our contribution}
\fi
In the present article, we show how to
extend the decoding of a Gabidulin code to a supercode at the cost of
a significant decrease of the decoding radius. With this decoder in hand
we perform  a polynomial time message
recovery attack on \RAMESSES{} and \LIGA{}.

 \section{Notation and Prerequisites}\label{sec:prerequisites}
In this article, we work over a finite field $\Fq$ and we
frequently consider two nested extensions denoted $\Fqm$ and $\Fqmu$.
Rank metric codes will be subspaces $\CC \subseteq \Fqm^n$,
the code length and dimension are respectively denoted as
$n$ and $k$.

Vectors are represented by lower case bold face letters such as
$\bfa, \bfc, \bfe$ and matrices by upper case letters
$\bfA, \bfG, \bfM$. The space of $m\times n$ matrices with entries in
a field $\mathbb K$ is denoted by $\Mspace{m,n}{\mathbb K}$. When
$m = n$ we denote it by $\Mspace{n}{\mathbb K}$ and the group of
non-singular matrices is denoted by $\GL{n}{\mathbb K}$.  Given a
matrix $\bfM \in \Mspace{m,n}{\Fq}$ its rank is denoted by
$\rk(\bfM)$.  Similarly, given a vector $\bfc \in \Fqm^n$, the {\em
  $\Fq$--rank} or {\em rank} of $\bfc$ is defined as the dimension of
the subspace of $\Fqm$ spanned by the entries of $\bfc$. Namely,
\[
  \rk (\bfc) \eqdef \dim_{\Fq} \left(\span_{\Fq} \{c_1, \dots, c_n\}\right).
\]

We will consider two notions of \emph{support} in the rank
metric. Inspired by the Hamming metric, the most natural one which we
will denote by the \emph{column support} of a vector $\bfc\in\Fqm^{n}$
is the linear subspace spanned by its coordinates:
\[
  \Supp(\bfc) \eqdef \span_{\Fq}\{c_{1},\dots, c_{n}\}.
\]
\if\longversion1
\noindent But we can define another notion, namely the \emph{row
  support}. Let $\BB$ be a basis of the extension field
$\Fqm/\Fq$. Then, we define the extension of $\bfc$ with respect to
$\BB$ as the matrix $ \ext{\bfc}{\BB}\in \Mspace{m,n}{\Fq}$ whose
columns are the entries of $\bfc$ represented in the basis $\BB$. The
row space of $\ext{\bfc}{\BB}$ with respect to $\BB$ will be called the
{\em row support} of $\bfc$, \ie
\[
  \RSupp(\bfc) \eqdef \{\bfx\ext{\bfc}{\BB} \mid \bfx \in
  \Fq^{m}\} \subset \Fq^{n}.
\]
Notice that the above definition does not depend on the choice
of the basis $\BB$.
\fi

Given $\bfc = (c_1, \dots, c_n) \in \Fqm^n$, and $j \in \{0, \dots, m-1\}$,
we denote
\[
  \Frob{\bfc}{j} \eqdef (c_1^{q^j}, \dots, c_n^{q^j}).
\]
Similarly, for a code $\CC \subseteq \Fqm^n$, we denote
\if\longversion1
\[
  \else
  \(
  \fi
  \Frob{\CC}{j} \eqdef \left\{\Frob{\bfc}{j} ~\big|~ \bfc \in \CC\right\}.
  \if\longversion1
\]
\else
\)
\fi
Let $n \leq m$, $k \leq n$ and $\bfg \in \Fqm^n$ with $\rk (\bfg) = n$,
the {\em Gabidulin code of dimension $k$ supported by $\bfg$} is defined as
\[
  \Gab{k}{\bfg} \eqdef \span_{\Fqm} \left\{\Frob{\bfg}{i} ~\big|~ 0 \leq i \leq
    k-1 \right\}.
\]

A {\em $q$--polynomial} is a polynomial $P \in \Fqm [X]$ whose monomials
are only $q$--th powers of $X$, {\em i.e.} a polynomial of the form
\if\longversion1
\[
  \else
  \(
  \fi
  P(X) = p_0 X + p_1 X^q + \cdots + p_r X^{q^r}.
  \if\longversion1
\]
\else
\)
\fi
Assuming that $p_r \neq 0$ then the integer $r$ is called the {\em
  $q$--degree} of $P$.  Such a polynomial induces an
$\Fq$--linear map $P : \Fqm \rightarrow \Fqm$ and we call the {\em
  rank} of the $q$--polynomial, the rank of the induced
map.
A well--known fact on $q$--polynomials is that the  $\Fq$--dimension
of the kernel of the induced endomorphism is bounded from
above by their $q$-degree.
Conversely, any $\Fq$--linear endomorphism of $\Fqm$ is uniquely
represented by a $q$--polynomial of degree $< m$.  Denote by
$\qpoly{}$ the space of $q$--polynomials, this space equipped with the
composition law is a non commutative ring which is left and right
Euclidean \cite[\S~1.6]{G96b} and the two--sided ideal
$(X^{q^m} - X)$ is the kernel of the canonical map
\[
  \qpoly{} \longrightarrow \textrm{Hom}_{\Fq}(\Fqm, \Fqm)
\]
inducing an isomorphism :
\if\longversion1
\[
\else
\(
\fi
\qpoly{} / (X^{q^m}-X) \simeq \textrm{Hom}_{\Fq}(\Fqm, \Fqm).
\if\longversion1
\]
\else
\)
\fi
Finally, given a positive integer $k < m$, we denote by $\qpoly{< k}$
(resp. $\qpoly{\leq k}$) the space of $q$--polynomials of $q$--degree
less than (resp. less than or equal to) $k$.
The Gabidulin code $\Gab{k}{\bfg}$ is canonically isomorphic to
$\qpoly{< k}$ under the isomorphism:
\[
  \map{\qpoly{< k}}{\Gab{k}{\bfg}}{P}{(P(g_1), \dots, P(g_n)).}
\]
The above map is actually an isometry: it is rank preserving.  In this
article, we will extensively use this isometry and Gabidulin codes
will be represented either as an evaluation code or as a space of
$q$--polynomials of bounded degree $\qpoly{<k}$, when one
representation is more suitable than the other.
\if\longversion1
In particular, given two $\Fqm$--linear subspaces $\code{A}, \code{B}$
of $\qpoly{}/(X^{q^m}-X)$ we define their {\em composition} as
\[
  \code{A}\circ\code{B} \eqdef \span_{\Fqm} \{P \circ Q ~|~
  P\in \code{A},\ Q \in \code{B}\}.
\]
This definition may be regarded as a rank metric analogue of
the so--called Schur product of codes in Hamming metric.
\fi

Another notion which is very useful in the sequel is the notion of {\em
  adjoint} of a class of $q$--polynomial $P = \sum_{i=0}^{m-1} p_i X^{q^i}$
in $\qpoly{}/(X^{q^m}-X)$,
which is defined as
\begin{equation}\label{eq:adjunction}
  \adj{P}(X) \eqdef \sum_{i=0}^{m-1}
  X^{q^{m-i}} p_i = \sum_{i=0}^{m-1} p_i^{q^{m-i}} X^{q^{m-i}}.
\end{equation}
Regarding $P$ as an $\Fq$--linear endomorphism of $\Fqm$, the notion
of adjoint is nothing but the usual notion of {\em adjoint} or {\em
  transposed} endomorphism with respect to the inner product
\if\longversion1
\[
  \else
  \( \fi \map{\Fqm \times \Fqm}{\Fq}{(x,y)}{\tr_{\Fqm/\Fq} (xy)}
  \if\longversion1 .\] \else \) \fi
\if\longversion0 (see
\cite[Section~2.4]{ACLN20}) \fi 
\if\longversion1 Details are given in a more general context in
\cite[Section~4.2]{ACLN20}, we give them here in the context of
$q$--polynomials for the sake of self--containedness.

With respect to this bilinear form, the multiplication map by
a scalar $x \mapsto ax$ for $a \in \Fqm^{\times}$ is {\em symmetric}
(or {\em self-adjoint}) since,
\[
  \langle x, ay \rangle = \tr_{\Fqm/\Fq} (xay) = \tr_{\Fqm/\Fq}(axy) =
  \langle ax, y \rangle.
\]
Next, the Frobenius endomorphism is {\em orthogonal}, that is to say
its adjoint $\adj{(X^q)}$ is its inverse $X^{q^{m-1}}$ because:
\[
  \langle x, y^q \rangle = \tr_{\Fqm/\Fq}(xy^q) =
  \tr_{\Fqm/\Fq}\left({(xy^q)}^{q^{m-1}}\right) = \tr_{\Fqm/\Fq}(x^{q^{m-1}}y)=
  \langle x^{q^{m-1}}, y \rangle.
\]
Finally, using that adjunction is anticommutative, \ie for any
$\Fq$--endomorphisms $f, g$ of $\Fqm$ we have
$\adj{(fg)} = \adj{g}\adj{f}$ we can prove that
Definition~\ref{eq:adjunction} coincides with that of adjoint
endomorphism.
\fi
In particular, for any
$P \in \qpoly{}/(X^{q^m}-X)$, we have $\rk{(P)} = \rk{(\adj{P})}$.

 \section{Two Rank Metric Proposals with Short Keys}
\subsection{\LIGA{} encryption
  scheme}\label{sec:FL_and_LIGA}

In this section, we recall Faure--Loidreau cryptosystem \cite{FL05}
and the repaired version \cite{WPR18} recently extended to
proposal \LIGA{}~\cite{RPW20}.

\begin{description}
  \item{\bf Parameters}
Let $q, m, n, k, u, w$ be positive integers such that $q$ is a prime
power, $u < k < n$ and
\[
	n-k > w > \lfloor \tfrac{n-k}{2} \rfloor.
\]
In the following, we consider the three finite fields
\if\longversion1
  \[
\else
  \(
\fi
   \Fq \subseteq \Fqm \subseteq \Fqmu.
\if\longversion1
  \]
\else
  \)
\fi

Let $t_{pub} \eqdef \lfloor \tfrac{n-k-w}{2} \rfloor$ be the public
$\Fq$--rank of the error in the ciphertext, and let $\bfG$ be a
generator matrix of a public Gabidulin code of length $n$ and
dimension $k$ over $\Fqm$.

\item{\bf Key generation.} Alice picks uniformly at random a
vector $\bfx\in \Fqmu^{k}$ whose last $u$ entries form an
$\Fqm$--basis of $\Fqmu$, and a vector
$\bfz\in\Fqmu^{n}$ of $\Fq-$rank $w$. In order to do that, she chooses
a full-rank vector $\bfs\in\Fqmu^{w}$ and a non-singular matrix
$\bfP \in \GL{n}{\Fq}$ and sets
$$\bfz = (\bfs\mid \mathbf{0})\cdot\bfP^{-1}.$$

The \emph{private key} is then $(\bfx, \bfz, \bfP)$ and the \emph{public key} is the \emph{vector}
\[
	\kpub \eqdef \bfx \cdot \bfG + \bfz \in \Fqmu^{n}.
\]

The key generation is summarised 
by Algorithm~\ref{keygenFL}.
\begin{algorithm}
	\DontPrintSemicolon
	\KwIn{Parameters $q, \bfG, m, n, k, u, w$.}
	\KwOut{Private key $\sk$, and public key $\pk$}
	$\bfx \sample \{\bfa\in\Fqmu^{k}\mid \dim(\span_{\Fqm} \{ a_{k-u+1}, \dots, a_{k}\}) = u \}$\;
	$\bfs \sample \{\bfa\in\Fqmu^{w}\mid \rk(\bfa) = w \}$\;
	$\bfP \sample \GL{n}{\Fq}$\;
	$\bfz \leftarrow (\bfs\mid \mathbf{0})\cdot\bfP^{-1}$\;
	$\kpub \leftarrow \bfx \cdot \bfG + \bfz$\;
	$\sk \leftarrow (\bfx, \bfz, \bfP)$\;
	$\pk \leftarrow \kpub$\;
	\KwRet{$(\sk, \pk)$}
	\caption{Original Faure--Loidreau Key Generation\label{keygenFL}}
\end{algorithm}
 
\item{\bf Encryption.} Let
$\bfm = (m_{1}, \dots, m_{k-u}, 0, \dots, 0) \in \Fqm^{k}$ be the
plaintext. Note that the last $u$ entries are chosen to be zero in
order to be able to decrypt.  The encryption of $\bfm$ works as
follows:

\begin{enumerate}
	\item Pick $\alpha\in\Fqmu$ at random.
	\item Pick $\bfe\in\Fqm^{n}$ such that $\rk(\bfe) \le t_{pub}$ at random.
\end{enumerate}

The ciphertext is then $\bfc\in\Fqm^{n}$:
\[
	\bfc = \bfm\cdot \bfG + \Tr(\alpha\kpub) + \bfe.
\]

As shown in (\ref{eq:action_of_kpub}) below, the public key acts on
the one hand as a one-time pad on the message $\bfm$, and on the other
hand adds a random error of large weight. The ciphertext can indeed be
seen as a codeword of the Gabidulin code corrupted by a two-part
errors formed by the private key $\bfz$ and the random error vector
$\bfe$:

\begin{equation}\label{eq:action_of_kpub}
	\bfc = (\bfm + \Tr(\alpha\bfx))\cdot \bfG + (\Tr(\alpha\bfz) + \bfe).
\end{equation}

With very high probability, the error in the \emph{ciphertext} is of rank-weight $w + t_{pub}$. See \cite{RPW20} for a detailed discussion about the parameters in order to avoid so-called weak keys.

\item{\bf Decryption.} The \emph{receiver} first computes
\[
	\bfc\cdot \bfP = (\bfm + \Tr(\alpha\bfx))\cdot \bfG\bfP + (\Tr(\alpha\bfs)\mid 0) + \bfe\bfP.
\]
whose last $n-w$ entries are given by
\[
	\bfc' = (\bfm + \Tr(\alpha\bfx))\cdot \bfG' + \bfe',
\]
where $\bfG'$ is the generator matrix of a Gabidulin code of length
$n-w$ and dimension $k$ and $\bfe'$ is an error vector of rank-weight
at most $t_{pub} = \lfloor \tfrac{n-w-k}{2}\rfloor$.

By decoding in this new Gabidulin code, the \emph{receiver} 
obtains the vector
\[
	\bfm' = \bfm + \Tr(\alpha\bfx).
\]
Since by construction $\bfm$ is chosen such that its last $u$
components are $0$ and the last $u$ components of $\bfx$ form a basis
of $\Fqmu/\Fqm$, the \emph{receiver} can 
\if\longversion1
compute
\[
	\alpha = \sum_{i=k-u+1}^{k}m'_{i}x_{i}^{\ast}
\]
where $(x_{k-u+1}^{\ast}, \dots, x_{k}^{\ast})$ is the dual basis of
$(x_{k-u+1}, \dots, x_{k})$ with respect to the on degenerate bilinear
form $(x,y) \mapsto \Tr(xy)$.
Knowing both $\alpha$ and $\bfx$, the \emph{receiver} can finally recover the plaintext $\bfm$.
\else
deduce the plaintext $\bfm$ from the knowledge of $\bfm'$ and $\bfx$.
\fi
This encryption scheme has no decryption failure.
\end{description}

\subsubsection{A Key Recovery Attack.} In \cite{GOT18},
Gaborit, Otmani and
  Talé--Kalachi showed that a valid private key for this system could
be efficiently computed from $\kpub$, and later in \cite{WPR18} and
\cite{RPW20}, Renner, Puchinger and Wachter--Zeh introduced a
coding-theoretic interpretation of the public key as a corrupted
codeword of an $u$--interleaved Gabidulin code. They derived an equally
powerful key recovery attack, and proved that the failure conditions
of both attacks were equivalent.

Based on this interpretation, Renner et. al. proposed to change the key
generation algorithm to resist previous attacks. More precisely, they
proved that if $\zeta$ denotes the dimension of the
$\Fqm$--support of $\bfz$, all then known attacks were inefficient when
$\zeta < \frac{w}{n-k-w}$. The new key generation can be summarised in
Algorithm~\ref{keygenLIGA}.

\begin{algorithm}
	\DontPrintSemicolon
	\KwIn{Parameters $q, \bfG, m, n, k, u, w, \zeta$.}
	\KwOut{Private key $\sk$, and public key $\pk$}
	$\mathbf{\gamma} \sample \{\bfa\in\Fqmu^{u}\mid \rk[q^{m}](\bfa) = u\}$\;
	$\bfx \sample \{\bfa\in\Fqmu^{k}\mid \dim(\span_{\Fqm}\{ a_{k-u+1}, \dots, a_{k}\}) = u \}$\;
	$\mathbf{\mathcal{A}} \sample \{\text{subspaces } \mathcal{U}\subseteq \Fqm^{w} \mid \dim\mathcal{U} = \zeta, \ \mathcal{U}\text{ has a basis of full-$\Fq$-rank elements}\}$\;
	$\begin{pmatrix}
			\bfs_{1} \\
			\vdots   \\
			\bfs_{u}
		\end{pmatrix} \sample \left\{\begin{pmatrix}
			\bfs'_{1} \\
			\vdots    \\
			\bfs'_{u}
		\end{pmatrix} \mid \langle \bfs'_{1}, \dots, \bfs'_{u} \rangle_{\Fqm} = \mathbf{\mathcal{A}},\ \rk(\bfs'_{i}) = w \ \forall i \right\}$\;
	$\bfs \leftarrow \sum_{i=1}^{u} \bfs_{i}\mathbf{\gamma_{i}^{\ast}}$\;
	$\bfP \sample \GL{n}{\Fq}$\;
	$\bfz \leftarrow (\bfs\mid \mathbf{0})\cdot\bfP^{-1}$\;
	$\kpub \leftarrow \bfx\bfG + \bfz$\;
	$\sk \leftarrow (\bfx, \bfz, \bfP)$\;
	$\pk \leftarrow \kpub$\;
	\KwRet{$(\sk, \pk)$}
	\caption{\LIGA{} Key Generation\label{keygenLIGA}}
\end{algorithm}

 \subsection{Ramesses}\label{subsec:ramesses}
In this section, we present the proposal \RAMESSES{} \cite{LLP20}
which is another repair of the Faure--Loidreau scheme. We chose to
describe the scheme in a rather different manner which turns out to be
completely equivalent to the original proposal. The connection
between this point of view and that of the original article is
detailed in Appendix~\ref{app:ramesses}.  Our presentation rests only
on $q$--polynomials. As explained in Section~\ref{sec:prerequisites},
the space $\qpoly{< k}$ will be regarded as a Gabidulin code of
dimension $k$. We also fix an $\Fq$--basis $\BB$ of $\Fqm$, which
permits to have an $m \times m$ matrix representation of
$q$--polynomials (modulo $(X^{q^m}-X)$) and conversely provides a
description of any $m \times m$ matrix with entries in $\Fq$ as a
$q$--polynomial of $q$--degree less than $m$.

\begin{description}
\item[{\bf Parameters}] The public parameters are integers
  $1 \leq w, k, \ell, t \leq m$ and should satisfy
  \begin{equation}\label{eq:decoding_condition}
    t \leq \frac{n-k-\ell - w}{2}\cdot
  \end{equation}
\item[{\bf Key generation}] Alice picks a uniformly random
  $q$--polynomial $\Ksec$ of rank $w$. The public key is the
  affine space:
  \[\Cpub \eqdef \Ksec + \qpoly{<k}.\]
\item[{\bf Encryption}] The plaintext $\bfm$ is a $t$--dimensional
  $\Fq$--subspace of $\Fqm$.  It is encrypted as follows:
  \begin{itemize}
  \item Pick a uniformly random $T \in \qpoly{}$ of $q$--degree $\ell$
  \item Pick a uniformly random $E \in \qpoly{<m}$ whose matrix
    representation admits $\bfm$ as its row space, equivalently
    $E$ is such that 
    $\bfm$ is the image of $\adj{E}$.
  \item Pick a uniformly random $C \in \qpoly{<k}$
  \item Pick a uniformly random $C_0 \in \qpoly{<k}$, yielding
    a uniformly random
    \[C' = C_0 + \Ksec \in \Cpub.\]
  \end{itemize}
  The ciphertext is
  \begin{equation}\label{eq:ramesses_ciphertext}
    Y \eqdef C + C' \circ T + E.
  \end{equation}
  Note that, this cipher text satisfies
  \begin{equation}\label{eq:ramesses_ciphertext_rewritten}
    Y = C_1 + \Ksec \circ T + E,
  \end{equation}
  where
  $C_1 = C + C_0 \circ T \in \qpoly{<k} + \qpoly{<k} \circ T \subseteq \qpoly{<k +
    \ell}$. This $C_1$ is {\em a prioiri} unknown by anyone.
\item[{\bf Decryption}] The owner of $\Ksec$ knows a $q$--polynomial
  $V \in \qpoly{\leq w}$ such that
  $V \circ \Ksec \equiv 0 \mod (X^{q^m}-X)$. Hence she can compute
  \[
    V \circ Y \equiv V \circ C_1 + V \circ E \mod (X^{q^m}- X).
  \]
  Now, $V \circ C_1 \in \qpoly{<k+\ell + w}$, {\em i.e.}
  lies in a Gabidulin code, while $\rk (V \circ E) \leq \rk (E) = t$.
  Hence, thanks to (\ref{eq:decoding_condition}), one can deduce
  $V \circ E$ and as soon as $\rk (V \circ E) = t$, the
  row space of the matrix representation of $E$ is that of $V \circ E$
  which can be recovered.
\end{description}

 \section{Decoding of Gabidulin Codes on the Right}
\label{subsec:right_decoding}

In this section, we assume that $n=m$, \ie
$\bfg \eqdef (g_{1}, \dots, g_{n})$ forms a basis of the extension
field $\Fqm/\Fq$. Let $\CC$ be a Gabidulin code of dimension $k$ and
support $\bfg$. Suppose we receive a vector $\bfy = \bfc + \bfe$ where
$\bfc\in\CC$ and $\bfe$ has rank $t \le
\lfloor\frac{n-k}{2}\rfloor$. There exist three $q$--polynomials
$C \in \qpoly{<k}$ and $Y, E \in \qpoly{<m}$ with $\rk(E) = t$ such that
\begin{equation}\label{eq:decoding_pb}
  Y = C + E
\end{equation}
and the polynomial $Y$ can be deduced from the knowledge of
$\bfy$ and the basis $\bfg$ by interpolation (see for instance
\cite[Chapter~3]{W13}).

\begin{remark}
  Note that the requirement $n=m$ is necessary.  Indeed, if $n < m$
  the choice of the interpolating polynomial $Y\in \qpoly{<m}$ is not
  unique and a wrong choice for $Y$ yields an $E = Y- C$ of too large
  rank. It is not clear to us how to weaken this condition.
\end{remark}

In a nutshell, our approach can be explained as follows. Starting from
the decoding problem~(\ref{eq:decoding_pb}) and applying the
adjunction operator we have to solve the problem
\[
  \adj{Y} = \adj{C} + \adj{E},
\]
where $\rk{\adj{E}} = \rk{E} \leq \lfloor \frac{n-k}{2} \rfloor$
and $\adj{C}$ is contained in a code which is equivalent to a Gabidulin
code. Hence, $\adj{C}$ can be recovered by applying the decoding
algorithm of \cite{L06a}.
\if\longversion1

In what follows, we give a detailed and self--contained presentation
of how to apply and implement this algorithm practically. We believe that this algorithm might be folklore, but
we weren't able to find it in the literature.

Starting from the decoding problem $Y = C+E$, the decoding problem can be
thought as finding the $q$--polynomial $C$, given $Y$. Using the
analogy with Reed--Solomon codes, Loidreau introduced in \cite{L06a} a
Welch--Berlekamp like algorithm to decode Gabidulin codes that
consists in finding the unique $q$--polynomial $V$ of $q$--degree less
than or equal to $t$ such that $V$ vanishes on the column support of
$\bfe$, which is equivalent to $V\circ E=0$, \ie $V$ is a left
annihilator of the error. Using a linearisation technique, this leads
to the resolution of a linear system that can be efficiently solved
provided that $t$ is less than half the minimum distance. It then
suffices to compute a left Euclidean division to recover $C$ and
therefore the codeword $\bfc$.

The core of the algorithm to follow consists in searching a
right--hand side annihilator of $E$ instead of a left--hand side one.  Due
to the non commutativity of the ring $\qpoly{}$, working on the
right--hand side is not directly equivalent to working on the left--hand side.

We begin to state the existence of a right--hand side annihilator.

\begin{proposition} Let $E$ be a $q$--polynomial of rank $t$. Then there
exists a unique monic $q$--polynomial $V$ with $\qdeg(V)\le t$ such that $E\circ V=0$ modulo
$(X^{q^m}-X)$.
\end{proposition}

\begin{proof} Let $Q \eqdef \sum_{i=0}^{t}a_{i}X^{q^{i}}$ be the unique monic $q$--polynomial of $q$--degree less than or equal to
  $t$ that vanishes exactly on $\img(\adj{E})$, \ie $\img(\adj{E}) = \ker Q$. Such a
  $q$--polynomial is guaranteed to exist (see for instance \cite{B15a} or \cite{O33}.) It follows that $\ker(E)=\img(\adj{Q})$.
  Moreover, $$\adj{Q} = \sum_{i=0}^{t}a_{i}^{q^{m-i}}X^{{q^{m-i}}} =
\left(\sum_{i=0}^{t}a_{t-i}^{q^{m-t+i}}X^{q^{i}}\right) \circ
X^{q^{m-t}}.$$

Let $V$ be the leftmost factor of $\adj{Q}$ in the above
decomposition. It is a $q$--polynomial of $q$--degree $t$, and
$E\circ\adj{Q}=0$ leads to $E\circ V\circ X^{q^{m-t}}=0$. Since
$X^{q^{m-t}}$ is invertible in $\qpoly{}/( X^{q^m} - X )$,
we get $E\circ V=0 \mod (X^{q^m}-X)$.
\end{proof}

The goal is to compute this right--hand side annihilator $V$. It
satisfies
\begin{equation}\label{eq:right_wb} Y\circ V = C\circ V + E\circ V \equiv
C\circ V \mod (X^{q^{m}}-X).
\end{equation}
Equation
\eqref{eq:right_wb} leads to a non linear system of $n$ equations
whose variables are the $t+k+1$ unknown coefficients of $C$ and $V$.

\begin{equation}\label{eq:right_wb_nl_system} \left\lbrace
	\begin{array}{l} (Y\circ V)(g_{i}) = C\circ V(g_{i}) \\ \qdeg V
\le t \\ \qdeg C \le k-1.
	\end{array} \right.
\end{equation}

\noindent Due to the non linearity, it is not clear how this can efficiently be
solved. That is why we consider instead the following linearised
system
\begin{equation}\label{eq:right_wb_sl_system} \left\lbrace
	\begin{array}{l} (Y\circ V)(g_{i}) = N(g_{i}) \\ \qdeg V \le t \\
\qdeg N \le k + t - 1,
	\end{array} \right.
\end{equation}
whose unknowns are the
$k+2t+1$ coefficients of $N$ and $V$. The latter is \emph{a priori}
more general than the former. But we can link the set of solutions of
the two systems. This is specified in the following two propositions.

\begin{proposition} Any solution $(V, C)$ of \eqref{eq:right_wb_nl_system}
gives a solution $(V, N=C\circ V)$ of \eqref{eq:right_wb_sl_system}.
\end{proposition}

\begin{proof}
This is the direct analogue of \cite[Proposition~1]{L06a}.
\end{proof}

\begin{proposition} Assume that $E$ is of rank $t\le \lfloor
\frac{n-k}{2}\rfloor$. If $(V, N)$ is a nonzero solution of
\eqref{eq:right_wb_sl_system} then $N = C\circ V$ where $C = Y-E$ is
the interpolating $q$--polynomial of the codeword.
\end{proposition}

\begin{proof} Let $(V, N) \neq (0, 0)$ be a solution of
\eqref{eq:right_wb_sl_system}, and let $C$ be the $q$--polynomial of
$q$--degree strictly less than $k$ that interpolates the codeword. Let
$R \eqdef N - C\circ V$. It is a $q$--polynomial, of $q$--degree at
most $k-1+t$. Assume that $R \neq 0$. Then,
\[
  (Y-C)\circ V = Y\circ V - C\circ V = N - C\circ V \equiv R \mod (X^{q^m}-X)
\]
  \noindent \ie
  \begin{equation}\label{eq:proof_eq_systems} E\circ V \equiv R \mod (X^{q^m}-X).
  \end{equation}
  \noindent Hence, $\rk(R) \le \rk(E) \le t$. Since $R\neq 0$,
  $\qdeg R \ge \dim\ker R$. Therefore, by the rank--nullity theorem,
  \[ n = \dim\ker R + \rk(R) \le \qdeg R + \rk(R) \le k-1+2t \le n-1 <
n
  \] which is a contradiction. Therefore, $R$ must be zero, \ie $N = C\circ V$.
\end{proof}

Thenceforth, whenever $t\le\lfloor\frac{n-k}{2}\rfloor$, any non zero
solution of \eqref{eq:right_wb_sl_system} allows to recover the
codeword by simply computing a right--hand side Euclidean division, which can be
done efficiently (see \cite{O33}). The decoding process boils down to
solving the system of equations
\eqref{eq:right_wb_sl_system}. However, despite the transformation,
the system is only semi-linear over $\Fqm$. To address this issue, we will again use
the adjoint of a (class of) $q$--polynomial. Let
$\wadj{y_{i}}\eqdef \adj{Y}(g_{i})$,
for all $i=1,\dots, n$. Using the anticommutativity of the adjoint
operator, system \eqref{eq:right_wb_sl_system} is equivalent to

\begin{equation}\label{eq:right_wb_linear_system}
\adj{V}(\wadj{y_{i}}) = \adj{N}(g_{i}) \mbox{ for } i=1,\dots,n.
\end{equation}

\noindent which is now an $\Fqm$--linear system of $n$ equations
whose unknowns are the coefficients of $\adj{V}$ and $\adj{N}$, that
are in explicit one-to-one correspondence with the coefficients of $V$
and $N$.

\begin{algorithm}\label{algo:right_welch_berlekamp}
	\caption{Right--hand side variant of Welch--Berlekamp}
\DontPrintSemicolon \KwIn{$q$ a prime power, $k,n,m$ integers,
$\bfg=(g_{1},\dots,g_{n})$ a basis of $\Fqm/\Fq$, $\CC$ a Gabidulin
code of dimension $k$ and support $\bfg$, $t\le
\lfloor\frac{n-k}{2}\rfloor$ an integer, $\bfy\in\Fqm^{n}$.}
\KwOut{$\bfc\in\CC$ such that $\bfy=\bfc+\bfe$ for some
$\bfe\in\Fqm^{n}$ with $\rk(\bfe)\le t$.}  Find $Y$ the
$q$--polynomial of $q$--degree strictly less than $n$ such that
$Y(g_{i}) = y_{i}$\; Compute $\adj{Y}$ and evaluate in $\bfg$ to get
$\wadj{\bfy} \eqdef \adj{Y}(\bfg) \in\Fqm^{n}$\; Find a non zero solution
$(V_{0}, N_{0})$ of the linear system
\eqref{eq:right_wb_linear_system}\; Compute $V \eqdef \adj{V_{0}}$ and $N
\eqdef \adj{N_{0}}$\; Recover $C$ by computing the right--hand side
Euclidean division of $N$ by $V$\; \KwRet{$\bfc \eqdef C(\bfg)$}
\end{algorithm}

An implementation of this algorithm using SageMath v9.2 \cite{sage9.2} can be found on Github: \url{https://github.com/mbombar/Attack_on_LIGA}.

\if\longversion1
\begin{remark}
  This right--hand side algorithm can be generalised in order to
  decode an $u$--interleaved Gabidulin code (see for instance
  \cite{WZ14} for further reference about interleaved Gabidulin codes
  and their decoding algorithms). Indeed, let $\bfY = \bfC + \bfE$ be
  a corrupted codeword of an $u$--interleaved Gabidulin code, with
  $\bfE\in\Fqm^{u\times n}$ being an error matrix of $\Fq$-rank equal
  to $t$. Then, the rows of $\bfE$ as seen as vectors of $\Fqm^{n}$
  share a common \emph{row} support, namely the \emph{row} support of
  $\bfE$, of dimension $t$. Hence, they also share a common right
  annihilator, of $q$--degree at most $t$. The algorithm from \cite[\S
  4]{LO06} where the errors shared a common \emph{column} support can
  be adapted straightforwardly in this setting, and allows to decode
  almost all error matrix $\bfE$ of rank-weight
  $t\le \lfloor\frac{u}{u+1}(n-k)\rfloor$. This can be used in order
  to attack the original Faure--Loidreau cryptosystem in the same
  fashion as \cite[\S 3]{RPW20}.
\end{remark}
\fi

 \else
In
Appendix~\ref{sec:appendix_decoding}, we give further details on how
to implement in practice such a decoder.
We believe that this algorithm might be folklore, but
we weren't able to find it in the literature.
\fi

 \section{Decoding Supercodes of Gabidulin Codes}\label{sec:supercode}
A common feature of the cryptanalyses to follow can be understood
as the decoding of a supercode of a Gabidulin code.
Consider a code (represented as a subspace of $\qpoly{<m}$)
\[
  \CC \eqdef \qpoly{<k} \oplus \TT,
\]
where $\TT \subseteq \qpoly{<m}$, the code $\CC$ benefits from a
decoding algorithm in a similar manner to that of \cite{L06a}.
Indeed, given a received word
\if\longversion1
\[
  \else
  \(
  \fi
  Y = C + E
  \if\longversion1
\]
\else
\)
\fi
where $C \in \CC$ and $E \in \qpoly{<m}$ with $\rk{E} = t$, one can
look for the left annihilator of $E$. Let $\Lambda \in \qpoly{\leq
  t}$ be the left annihilator of $E$. We have to solve
\[
  \Lambda \circ Y \equiv \Lambda \circ C \mod (X^{q^m}-X),
\]
where the unknowns are $\Lambda, C$. Then, similarly to the decoding
of Gabidulin codes, one may linearise the system. For this sake, recall
that $C = C_0 + T$ for $C_0 \in \qpoly{<k}$ and $T \in \TT$.
Therefore, we are looking for the solutions of a system
\begin{equation}\label{item:supercode_system}
  \Lambda \circ Y \equiv N \mod (X^{q^m}-X)
\end{equation}
where $N \in (\qpoly{\leq t} \circ \qpoly{<k}) +\qpoly{\leq t} \circ \TT =
\qpoly{<k+t} + \qpoly{\leq t} \circ \TT$.

\begin{lemma}\label{lem:super_code_decoding_radius}
  Under the assumption that
  $(\qpoly{<k+t} + \qpoly{\leq t}\circ \TT) \cap (\qpoly{\leq t} \circ E)
  = \{0\}$, any nonzero solution $(\Lambda, N)$ of the
  system~(\ref{item:supercode_system}) satisfies
  $\Lambda \circ E = 0$.
\end{lemma}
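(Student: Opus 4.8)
The plan is to exploit the decomposition $Y = C+E$ directly and to reduce the whole statement to the transversality hypothesis, with no structural input about $\TT$ or $E$ beyond it. Suppose $(\Lambda, N)$ is a nonzero solution of (\ref{item:supercode_system}), so that $\Lambda \in \qpoly{\leq t}$ and $N \in \qpoly{<k+t} + \qpoly{\leq t}\circ\TT$. Composing $Y = C+E$ on the left by $\Lambda$ in the quotient ring $\qpoly{}/(X^{q^m}-X)$ gives
\[
  \Lambda \circ Y = \Lambda \circ C + \Lambda \circ E,
\]
so that $\Lambda\circ E \equiv N - \Lambda\circ C \mod (X^{q^m}-X)$. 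The goal is then to show that the right-hand side lives in the subspace $\qpoly{<k+t}+\qpoly{\leq t}\circ\TT$ while the left-hand side lives in $\qpoly{\leq t}\circ E$, so that the hypothesis pins it to $0$.

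First I would locate $\Lambda\circ C$ inside $\qpoly{<k+t}+\qpoly{\leq t}\circ\TT$. Since $C \in \CC = \qpoly{<k}\oplus\TT$, write $C = C_0 + T$ with $C_0 \in \qpoly{<k}$ and $T \in \TT$, whence $\Lambda\circ C = \Lambda\circ C_0 + \Lambda\circ T$. The elementary fact needed here is the additivity of the $q$--degree under composition: $\qdeg(\Lambda\circ C_0) \le \qdeg\Lambda + \qdeg C_0 \le t + (k-1) < k+t$, so that $\Lambda\circ C_0 \in \qpoly{<k+t}$, while $\Lambda\circ T \in \qpoly{\leq t}\circ\TT$ by definition. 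Hence $\Lambda\circ C \in \qpoly{<k+t}+\qpoly{\leq t}\circ\TT$.

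Combining the two observations, $\Lambda\circ E \equiv N - \Lambda\circ C$ is a difference of two elements of $\qpoly{<k+t}+\qpoly{\leq t}\circ\TT$, hence lies in that subspace; on the other hand $\Lambda\circ E \in \qpoly{\leq t}\circ E$ trivially, since $\Lambda\in\qpoly{\leq t}$. Therefore
\[
  \Lambda\circ E \in \left(\qpoly{<k+t}+\qpoly{\leq t}\circ\TT\right) \cap \left(\qpoly{\leq t}\circ E\right),
\]
and the assumption forces this intersection to be $\{0\}$, giving $\Lambda\circ E = 0$.

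The argument is essentially immediate once the correct subspaces are isolated, and I do not expect a genuine obstacle. The only point demanding a little care — and what I regard as the crux — is the degree bookkeeping together with the fact that every equality must be read in $\qpoly{}/(X^{q^m}-X)$, so that $\qpoly{<k+t}$, $\qpoly{\leq t}\circ\TT$ and $\qpoly{\leq t}\circ E$ are indeed the subspaces of the quotient appearing in the hypothesis. The entire content of the lemma is packaged into that transversality condition.
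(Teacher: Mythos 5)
Your proof is correct and follows essentially the same route as the paper's: rewrite $\Lambda\circ E \equiv N - \Lambda\circ C$ modulo $(X^{q^m}-X)$, observe that the right-hand side lies in $\qpoly{<k+t}+\qpoly{\leq t}\circ\TT$ and the left-hand side in $\qpoly{\leq t}\circ E$, and invoke the transversality hypothesis. The only difference is that you spell out the decomposition $C = C_0 + T$ and the degree count placing $\Lambda\circ C$ in the correct subspace, which the paper leaves implicit from the derivation of the system.
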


\begin{proof}
  Let $(\Lambda, N)$ be such a nonzero solution.
  Then,
  \[
    \Lambda \circ Y - \Lambda \circ C \equiv \Lambda \circ E \mod
    (X^{q^m} - X).
  \]
  Since $\Lambda \circ Y \equiv N \mod (X^{q^m}-X)$, the left--hand
  side is contained in $(\qpoly{<k+t} + \qpoly{\leq t}\circ \TT)$
  while the right--hand one is contained in $\qpoly{\leq t} \circ E$.
  Therefore, by assumption, both sides are zero. This yields the
  result.
\end{proof}

Under the hypotheses of Lemma~\ref{lem:super_code_decoding_radius},
decoding can be performed as follows.
\begin{enumerate}
\item Solve System~(\ref{item:supercode_system}).
\item Take a nonzero solution $(\Lambda, N)$ of the system.  Compute
  the right kernel of $\Lambda$. This kernel contains the image of $E$
  and hence the support of the error.
\item Knowing the support of $E$, one can recover it by solving a linear
  system. See for instance \cite[\S 3]{GRS13} or \cite[\S
  1.4]{AGHT17}.
\end{enumerate}

\begin{remark}
  Note that for the decoding of Gabidulin codes,
  once a solution $(\Lambda, N)$ is computed, one can recover
  $C$ by left Euclidean division of $N$ by $\Lambda$. In the present
  situation, this calculation is no longer efficient. Indeed,
  the proof of Lemma~\ref{lem:super_code_decoding_radius} permits
  only to assert that $N \equiv \Lambda \circ C \mod (X^{q^m} - X)$.
  In the Gabidulin case, the fact that $\qdeg C < k$ permits
  to assert that $\qdeg \Lambda \circ C < m$ and hence that
  $N = \Lambda \circ C$. This is no longer true in our setting
  since, there is {\em a priori} no upper bound on the
  $q$--degree of $C$. For this reason, we need to use the knowledge
  of the support of the error to decode.
\end{remark}

For the decoding to succeed, the condition
$(\qpoly{<k+t} + \qpoly{\leq t}\circ \TT) \cap (\qpoly{\leq t} \circ
E) = \{0\}$ needs to be satisfied. In the case of Gabidulin codes
(\ie{} $\TT = \{0\}$), this is
guaranteed by a minimum distance argument entailing that
$\qpoly{< k+t} \cap \qpoly{\leq t} \circ E$ is zero as soon as
$t \leq \frac{n-k}{2}$. In our situation, estimating the minimum
distance of $\qpoly{< k+t} + \qpoly{\leq t} \circ \TT$ is
difficult. However, one can expect that in the typical case, the
intersection
$(\qpoly{< k+t} + \qpoly{\leq t} \circ \TT) \cap (\qpoly{\leq t} \circ
E)$ is $0$ when the sums of the dimensions of the codes is
less than that of the ambient space.  Therefore, one can reasonably
expect to correct almost any error of rank $t$ as soon as
\begin{equation}
  \label{eq:condition_supercode_decoding}
  k + 2t + \dim (\qpoly{\leq t} \circ \TT) \leq n.
\end{equation}
In the case $\TT = \{0\}$, we find back the decoding radius of
Gabidulin codes.

\if\longversion1
\begin{remark}
Note that the previous approach applies {\em mutatis mutandis}
to the decoding of supercodes of Reed--Solomon codes.
\end{remark}
\fi

\paragraph{The right--hand side version.} In the spirit of
Section~\ref{subsec:right_decoding}, a similar approach using
right--hand side decoding shows that decoding is also possible when
\begin{equation}
  \label{eq:condition_supercode_right_decoding}
  k + 2t + \dim (\TT \circ \qpoly{\leq t}) \leq n.
\end{equation}

\if\longversion1
We will use this decoding algorithm to attack \LIGA{}
and \RAMESSES{}.  For \LIGA{}, the code $\TT$ is a random code of
low dimension, while for \RAMESSES{}, $\TT$ is a Gabidulin code.
\fi

 \section{Applications to Cryptanalysis}
\subsection{\RAMESSES{}}
\if\longversion1
The decoder introduced in Section~\ref{sec:supercode} is the key of
our cryptanalysis of \RAMESSES{}.
\fi
Using the notation of
Section~\ref{subsec:ramesses}, suppose we have a ciphertext as
in (\ref{eq:ramesses_ciphertext_rewritten}):
\[
  Y  = C+E \quad {\rm with } \quad C = C_1 + \Ksec \circ T ,
\]
where $C_1 \in \qpoly{<k+\ell}$, $T \in \qpoly{\ell}$ and
$E \in \qpoly{<m}$ of rank $t$. Recall that the plaintext is the row
space of $E$ (equivalently, the image of $\adj{E}$).  We perform the
right--hand side version of the decoding algorithm of
Section~\ref{sec:supercode}. Here the code $\TT$ is
$\Ksec \circ \qpoly{\leq \ell}$ and the supercode $\CC$ is
$\qpoly{<k+\ell} + \TT$.
\if\longversion1
Therefore, cryptanalysis will consist in decoding
a supercode of a Gabidulin code of dimension $k+\ell$.
\fi
We compute the solutions $(\Lambda, N)$ of the
system
\[
  Y \circ \Lambda \equiv N \mod (X^{q^m} - X),
\]
where
\[ N \in \CC \circ \qpoly{\leq t} = \qpoly{<k+\ell+t} + \Ksec \circ
  \qpoly{\leq \ell+t}.
\]
According to Lemma~\ref{lem:super_code_decoding_radius} and
(\ref{eq:condition_supercode_right_decoding}), the algorithm will very
likely return pairs of the form $(\Lambda, C \circ \Lambda)$ with
$E \circ \Lambda = 0$ as soon as
\begin{equation}\label{eq:parameters_to_break_ramesses}
  k + \ell + 2t + \dim (\TT \circ \qpoly{\leq t}) =
  k +\ell + 2t + \dim (\Ksec \circ
  \qpoly{\leq t + \ell}) = k + 3t + 2\ell + 1 \leq n.
\end{equation}
Once such a $\Lambda$ is obtained, one recovers $E$
and the image of $\adj{E}$ yields the plaintext.

A comparison of (\ref{eq:parameters_to_break_ramesses}) with
the proposed parameters for \RAMESSES{} in \cite[Section~4]{LLP20} is
given in Table~\ref{tab:table_Ramesses}. As observed,
inequality~(\ref{eq:parameters_to_break_ramesses}) is satisfied for
any proposed parameter set.

\begin{table}[!h]
  \begin{center}
  \begin{tabular}{|c|c|c|c|c|c||c|}
  \hline
    \ \ $m\ (=n)$ \ \ & \ \ $k$ \ \ & \ \ $w$ \ \ & \ \ $\ell$ \ \ &
    \ \ $t$ \ \ & \ \ Security (bits)\ \ & \ \ $k+3t+2\ell + 1$ \ \ \\
  \hline \hline
  64 & 32 & 19 & 3 & 5 & 141 & 54 \\
  80 & 40 & 23 & 3 & 7 & 202 & 68 \\
  96 & 48 & 27 & 3 & 9 & 265 & 82 \\
  \hline
  164 & 116 & 27 & 3 & 9 & 256 & 150 \\
  \hline
\end{tabular}
\medskip
\caption{\label{tab:table_Ramesses} This table compares the values of the
  formula~(\ref{eq:parameters_to_break_ramesses}) with the parameters proposed
  for \RAMESSES{}. The first three rows are parameters for \RAMESSES{} as a KEM
  and the last one are parameters for \RAMESSES{} as a PKE. Note that for any
  proposed parameter set, we have $m=n$.}
\end{center}
\end{table}

 \subsection{A Message Recovery Attack Against \LIGA{} Cryptosystem}

In this section, we show that it is possible to recover the plaintext from a ciphertext. Notice that \LIGA{} cryptosystem has been proven IND-CCA2 in \cite{RPW20}, under some computational assumption, namely the Restricted Gabidulin Code Decision Problem (\cite{RPW20}, Problem 4). We are not disproving this claim here, however our attack can be precisely used as a distinguisher, and hence this problem is not as hard as supposed.

Recall that $\bfG$ is a generator matrix of a public Gabidulin code
$\Gab{k}{\bfg}$, the public key is a noisy vector
$\kpub = \bfx \cdot \bfG + \bfz$ and the encryption of a message
$\bfm$ is $\bfc = \bfm\cdot \bfG + \Tr(\alpha\kpub) + \bfe$ for some
uniformly random element $\alpha\in\Fqmu$ and a uniformly random error
$\bfe$ of small rank weight
$t_{pub} = \lfloor \frac{n-k-w}{2} \rfloor$ both chosen by Alice.
See Section~\ref{sec:FL_and_LIGA} for further details.

The attack works in two parts. First, we introduce a supercode of the public Gabidulin code, in which we are able to decode the ciphertext and get rid of the small error. Then, we recover the plaintext.

\subsubsection{Step 1: Get rid of the small error.}
Let $\zeta \eqdef \rk[q^{m}](\bfz)$, so that
$\bfz = \sum_{i=1}^{\zeta}\mu_{i}\bfz_{i}$ where the $\mu_{i}$'s $\in\Fqmu$ and
the $\bfz_{i}$'s $\in\Fqm^{n}$ are both linearly independent over $\Fqm$. The
ciphertext can now be written as
\begin{equation}
  \bfc = \bfm\cdot\bfG + \sum_{i=1}^{\zeta}\Tr(\alpha\mu_{i})\bfz_{i}
  + \bfe
\end{equation}
Let

\begin{equation}\label{eq:code_C}
  \CC \eqdef \Gab{k}{\bfg} + \span_{\Fqm} \{\bfz_1, \dots, \bfz_\zeta\} \subseteq
  \Fqm^n.
\end{equation}

The ciphertext can be seen as a
codeword of $\CC$ corrupted by a small rank weight error
$\bfe$. Moreover, $\CC$ can be computed from public data as
suggested by the following statement.

\begin{theorem}\label{thm:prob_C_eq_Cpub}
  Let $\CC$ be the code defined in (\ref{eq:code_C}) and $\CC_{pub}$
  be the code generated by $\Gab{k}{\bfg}$ and $\Tr(\gamma_{i}\kpub)$
  for $i\in \{1,\dots,\zeta\}$, where the $\gamma_{i}$'s denote $\zeta$
  elements of $\Fqmu$ linearly independent over $\Fqm$. Then, for a
  uniformly random choice of $(\gamma_1, \dots, \gamma_{\zeta})$,
  \[
    \Prob(\CC = \CC_{pub}) = 1 - e^{O\left(\frac{1}{q^m}\right)}.
  \]
\end{theorem}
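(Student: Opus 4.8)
The plan is to reduce the equality $\CC = \CC_{pub}$ to the invertibility of a single $\zeta\times\zeta$ matrix over $\Fqm$, and then to estimate the probability that this matrix is invertible. First I would expand the generators of $\CC_{pub}$: since $\Tr$ is $\Fqm$--linear and $\bfG$ has entries in $\Fqm$, for each $i$ one has
\[
  \Tr(\gamma_i\kpub) = \Tr(\gamma_i\bfx)\cdot\bfG + \sum_{j=1}^{\zeta}\Tr(\gamma_i\mu_j)\,\bfz_j,
\]
where $\Tr(\gamma_i\bfx)\in\Fqm^k$, so that $\Tr(\gamma_i\bfx)\cdot\bfG\in\Gab{k}{\bfg}$, while the second summand lies in $\span_{\Fqm}\{\bfz_1,\dots,\bfz_\zeta\}$. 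This already gives $\CC_{pub}\subseteq\CC$ for every choice of the $\gamma_i$, and it shows that, modulo $\Gab{k}{\bfg}$, the $i$--th generator of $\CC_{pub}$ is $\sum_j M_{ij}\bfz_j$, where $M\eqdef(\Tr(\gamma_i\mu_j))_{1\le i,j\le\zeta}\in\Mspace{\zeta}{\Fqm}$.

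Next I would pass to the quotient $\Fqm^n/\Gab{k}{\bfg}$, in which $\CC/\Gab{k}{\bfg}$ is spanned by the classes $\overline{\bfz_1},\dots,\overline{\bfz_\zeta}$ and the classes of the generators of $\CC_{pub}$ are $\sum_j M_{ij}\overline{\bfz_j}$. If $M$ is invertible these classes span all of $\CC/\Gab{k}{\bfg}$, whence $\CC_{pub}=\CC$. This implication requires no hypothesis on whether the $\bfz_j$ stay independent modulo $\Gab{k}{\bfg}$, so invertibility of $M$ is a sufficient condition in all cases (and, when the $\overline{\bfz_j}$ are independent, a necessary one as well). It therefore suffices to bound $\Prob(M \text{ invertible})$ from below.

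For the probabilistic estimate I would use that the $\Fqm$--bilinear trace form $(\gamma,\mu)\mapsto\Tr(\gamma\mu)$ on $\Fqmu$ is non--degenerate. As the $\mu_j$ are $\Fqm$--linearly independent, the map $\phi\colon\gamma\mapsto(\Tr(\gamma\mu_j))_{j=1}^{\zeta}$ is a surjective $\Fqm$--linear map $\Fqmu\to\Fqm^\zeta$ with kernel of $\Fqm$--dimension $u-\zeta$. The rows of $M$ are exactly $\phi(\gamma_1),\dots,\phi(\gamma_\zeta)$, so $M$ is invertible iff these images form a basis of $\Fqm^\zeta$. Revealing the $\gamma_i$ one at a time, each uniform in the complement of $\span_{\Fqm}\{\gamma_1,\dots,\gamma_{i-1}\}$ by the independence constraint, a short count using $\dim\ker\phi=u-\zeta$ shows that, conditioned on $\phi(\gamma_1),\dots,\phi(\gamma_{i-1})$ being independent, the image $\phi(\gamma_i)$ fails to be new with probability $(q^{m(u-\zeta)}-1)/(q^{m(u-i+1)}-1)$. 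The dominant contribution comes from $i=\zeta$ and is of order $q^{-m}$, so
\[
  \Prob(M \text{ invertible})=\prod_{i=1}^{\zeta}\Bigl(1-\tfrac{q^{m(u-\zeta)}-1}{q^{m(u-i+1)}-1}\Bigr)=1-O(q^{-m}),
\]
which is the announced estimate.

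The main obstacle is the last step: the linear--independence constraint on the $\gamma_i$ means the conditioning is over $\Fqmu\setminus\span_{\Fqm}\{\gamma_1,\dots,\gamma_{i-1}\}$ rather than over all of $\Fqmu$, and one must verify carefully that this does not inflate the failure probability beyond $O(q^{-m})$ and that the dominant term is indeed of order $q^{-m}$. By contrast, the algebraic reduction of the first two paragraphs is routine.
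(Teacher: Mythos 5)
Your argument is correct, and its first half coincides with the paper's: both reduce $\CC=\Cpub$ to the invertibility of the $\zeta\times\zeta$ matrix $M=(\Tr(\gamma_i\mu_j))_{i,j}$ (the paper gets there by solving for the coefficients $\alpha_j$ of $\sum_j\alpha_j\Tr(\gamma_j\kpub)$ needed to absorb a given $\sum_i\lambda_i\bfz_i$; your quotient-by-$\Gab{k}{\bfg}$ phrasing is equivalent, and both only use invertibility as a sufficient condition, so neither needs the $\bfz_i$ to stay independent modulo the Gabidulin code). Where you diverge is the probability estimate. The paper reinterprets singularity of $M$ as ``$\Gamma=\span_{\Fqm}(\gamma_1,\dots,\gamma_\zeta)$ is not a complement of $\span_{\Fqm}(\mu_1,\dots,\mu_\zeta)^\perp$ in $\Fqmu$'', counts complements via a separate orbit--stabiliser lemma to get $\Prob(M\text{ nonsingular})=q^{m\zeta(u-\zeta)}/\gbinom{u}{\zeta}_{q^m}$, and then cites an external inequality to lower-bound this by $(1-q^{-m})^{q^m/(q^m-1)}$. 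Your sequential revelation of the $\gamma_i$ produces the same quantity exactly --- one checks that $\prod_{i=1}^{\zeta}\bigl(1-\tfrac{q^{m(u-\zeta)}-1}{q^{m(u-i+1)}-1}\bigr)=q^{m\zeta(u-\zeta)}/\gbinom{u}{\zeta}_{q^m}$ --- but is self-contained and makes the $O(q^{-m})$ failure probability, dominated by the $i=\zeta$ factor, immediate without any cited bound. The worry in your last paragraph is unfounded: conditioned on any admissible $\gamma_1,\dots,\gamma_{i-1}$, the bad set is $\phi^{-1}(V_{i-1})\setminus\span_{\Fqm}\{\gamma_1,\dots,\gamma_{i-1}\}$, which has exactly $q^{m(u-\zeta+i-1)}-q^{m(i-1)}$ elements out of the $q^{mu}-q^{m(i-1)}$ admissible choices (the excluded span sits inside $\phi^{-1}(V_{i-1})$, so the exclusion only \emph{helps}); this ratio simplifies to your stated formula, is independent of the conditioning values, and the chain rule then gives the product exactly.
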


The proof of Theorem~\ref{thm:prob_C_eq_Cpub} rests on the following
technical lemma.

\begin{lemma}\label{lemma:denombrement}
  Let $F$ be a linear subspace of dimension $m$ in a linear space $E$
  of dimension $n$ over a finite field $\Fq$. Then,
  $\#\{G \mid F \oplus G = E\} = q^{m(n-m)}$.
\end{lemma}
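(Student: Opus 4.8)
The plan is to count the number of complementary subspaces (direct-sum complements) $G$ to a fixed $m$-dimensional subspace $F$ inside an $n$-dimensional space $E$ over $\Fq$. The cleanest route I would take is to fix a convenient basis adapted to $F$ and parametrize the complements explicitly as graphs of linear maps. Concretely, choose a basis $(f_1,\dots,f_m)$ of $F$ and extend it to a basis $(f_1,\dots,f_m,e_1,\dots,e_{n-m})$ of $E$, so that $E = F \oplus G_0$ where $G_0 = \span_{\Fq}\{e_1,\dots,e_{n-m}\}$ is one distinguished complement.

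The key observation is that every complement $G$ of $F$ has dimension $n-m$ and projects isomorphically onto $E/F$; equivalently, writing the quotient map $\pi : E \to E/F \cong G_0$, the restriction $\pi|_G$ is an isomorphism. Hence each such $G$ is the graph of a unique $\Fq$-linear map $\varphi : G_0 \to F$, namely $G = \{\, v + \varphi(v) \mid v \in G_0 \,\}$. Conversely, the graph of any linear map $\varphi : G_0 \to F$ is a subspace of dimension $n-m$ that intersects $F$ trivially (if $v + \varphi(v) \in F$ then $v \in F \cap G_0 = \{0\}$), hence is a complement. This sets up a bijection between the set $\{G \mid F \oplus G = E\}$ and $\mathrm{Hom}_{\Fq}(G_0, F)$.

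It then remains only to count: $\mathrm{Hom}_{\Fq}(G_0, F)$ is an $\Fq$-vector space of dimension $(n-m)\cdot m$, so it has exactly $q^{m(n-m)}$ elements, which gives the claimed formula. I would present the two directions of the bijection explicitly (well-definedness of $\varphi$ from a given $G$, and the complement property of a graph) since that is the heart of the argument.

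The only step requiring a little care — and what I would flag as the main (though modest) obstacle — is verifying that $G \mapsto \varphi_G$ is genuinely a well-defined bijection rather than merely a surjection or injection. Well-definedness and injectivity both follow from the fact that $\pi|_G$ is an isomorphism, so $G$ is recovered as the graph of $\varphi_G = (\text{inclusion } F) \circ (\pi|_G)^{-1}$ read appropriately; surjectivity is the graph-is-a-complement computation above. There is no genuine difficulty here, just bookkeeping, and this is why the lemma is stated as a technical ingredient for Theorem~\ref{thm:prob_C_eq_Cpub} rather than as a result of independent interest.
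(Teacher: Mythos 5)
Your proof is correct, but it takes a genuinely different route from the paper's. The paper argues via orbit--stabilizer: it identifies $\Stab(F) \cong (\GL{m}{\Fq}\times\GL{n-m}{\Fq})\ltimes \Mspace{m,n-m}{\Fq}$, shows this group acts transitively on the complements of $F$, computes the stabilizer of one complement as $\GL{m}{\Fq}\times\GL{n-m}{\Fq}$, and divides; the two $\GL{}{}$ factors cancel and only $q^{m(n-m)}$ survives. You instead parametrize the complements directly as graphs of linear maps $\varphi : G_0 \to F$, establishing a bijection with $\mathrm{Hom}_{\Fq}(G_0,F)$ and counting that vector space. Your version is more elementary and self-contained: it never needs the order of $\GL{m}{\Fq}$, the semidirect-product structure, or the verification of transitivity, and it exposes the set of complements as a torsor under $\mathrm{Hom}_{\Fq}(E/F,F)$, which makes the count immediate. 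The two arguments are closely related under the hood --- the unipotent factor $\Mspace{m,n-m}{\Fq}$ in the paper's stabilizer is exactly your $\mathrm{Hom}_{\Fq}(G_0,F)$ acting simply transitively on the complements --- so you have in effect extracted the kernel of the paper's computation and stripped away the group-theoretic packaging. The point you flag as needing care (well-definedness and bijectivity of $G \mapsto \varphi_G$) is indeed the only place where any verification is required, and your sketch of it (via $\pi|_G$ being an isomorphism, and the graph-meets-$F$-trivially computation) is complete.
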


\begin{proof}
  Let $\Stab(F)$ denote the stabiliser of $F$ under the action of
  $\textrm{GL}(E)$. It is isomorphic to the group of the matrices of the form $\begin{pmatrix}\bfA & \bfB \\ \bigzero & \bfC\end{pmatrix}$ with $\bfA \in \GL{m}{\Fq}$, $\bfC\in\GL{n-m}{\Fq}$ and $\bfB\in\Mspace{m, n-m}{\Fq}$, \ie{}
  $$\Stab(F) \cong (\GL{m}{\Fq}\times\GL{n-m}{\Fq})\ltimes
  \Mspace{m,n-m}{\Fq}.$$ 

\noindent            This group acts transitively on the complement spaces of $F$. Indeed, let $G$ and $G'$ be such that $F\oplus G=F\oplus G'=E$. Let $(f_{1}, \dots, f_{m})$ be a basis of $F$ and $(g_{1},\dots,g_{n-m})$ (respectively $(g_{1}',\dots,g_{n-m}')$) be a basis of $G$ (resp. $G'$). Then the linear map that stabilises $F$ and maps $g_{i}$ onto $g_{i}'$ is an element of $\Stab(F)$ that maps $G$ onto $G'$. The stabiliser of a complement $G$ under this action is simply $\GL{m}{\Fq}\times\GL{n-m}{\Fq}.$ Hence, $$\#\{G \mid F \oplus G = E\} = \dfrac{(\# \GL{m}{\Fq}) \times (\#\GL{n-m}{\Fq}) \times q^{m(n-m)}}{(\# \GL{m}{\Fq}) \times (\#\GL{n-m}{\Fq})} = q^{m(n-m)}.$$
\end{proof}

{
  \renewenvironment{proof}{\noindent {\itshape Proof of Theorem~\ref{thm:prob_C_eq_Cpub}.}}{\qed \bigskip}

  \begin{proof}
    We wish to estimate the probability that $\CC = \Cpub$.
    Note first that inclusion $\supseteq$ is always satisfied.
  \if\longversion1
 Indeed, let
  $\bfc\in\CC_{pub}$. Then, there exist $\bfm\in\Fqm^{k}$ and
  $\lambda_{1},\dots,\lambda_{\zeta}\in\Fqm$ such that
          \begin{equation*}
            \begin{array}{ccc}
              \bfc & = \bfm\bfG + \displaystyle\sum_{i=1}^{\zeta}\lambda_{i}\Tr(\gamma_{i}\kpub) & \\
                   & = \left(\bfm + \displaystyle\sum_{i=1}^{\zeta}\lambda_{i}\Tr(\gamma_{i}\bfx)\right)\bfG & + \displaystyle\sum_{i=1}^{\zeta}\sum_{j=1}^{\zeta}\lambda_{j}\Tr(\gamma_{j}\mu_{i})\bfz_{i}
            \end{array}
          \end{equation*}
          and $\bfc\in\CC$.
  \else
  This can be checked by an elementary calculation.
  \fi

  Therefore, the following equality of events holds:

  $$(\CC = \Cpub) = (\CC \subseteq \Cpub),$$

  \noindent and we are reduced to study the probability that
  $\CC \subseteq \Cpub$.

 Let $\bfc\in\CC$. There
  exists $\bfm\in\Fqm^{k}$ and
  $\lambda_{1},\dots,\lambda_{\zeta}\in\Fqm$ such that
  $$\bfc = \bfm\bfG +
  \displaystyle\sum_{i=1}^{\zeta}\lambda_{i}\bfz_{i}.$$
  If we can find
  $\mathbf{\alpha} \eqdef
  (\alpha_{1},\dots,\alpha_{\zeta})\in\Fqm^{\zeta}$ such that
  $\bfc - \displaystyle\sum_{i=1}^{\zeta}
  \alpha_{i}\Tr(\gamma_{i}\kpub) \in \Gab{k}{\bfg}$, then we are done.
          \begin{equation*}
            \begin{array}{cc}
              \bfc - \displaystyle\sum_{i=1}^{\zeta} \alpha_{i}\Tr(\gamma_{i}\kpub) = & \\
              \left(\bfm - \displaystyle\sum_{i=1}^{\zeta}\alpha_{i}\Tr(\gamma_{i}\bfx)\right)\bfG + \displaystyle\sum_{i=1}^{\zeta}\left( \lambda_{i} - \sum_{j=1}^{\zeta}\alpha_{j}\Tr(\gamma_{j}\mu_{i})\right)\bfz_{i}.
            \end{array}
          \end{equation*}
          It suffices to choose $\alpha$ such that $\lambda_{i} - \sum_{j=1}^{\zeta}\alpha_{j}\Tr(\gamma_{j}\mu_{i}) = 0$ for $i\in \{1,\dots,\zeta\}$, \ie

          $$(\lambda_{1},\dots,\lambda_{\zeta}) = (\alpha_{1}, \dots, \alpha_{\zeta})\begin{pmatrix} \Tr(\gamma_{1}\mu_{1}) & \cdots & \Tr(\gamma_{1}\mu_{\zeta}) \\ \vdots & \ddots & \vdots \\ \Tr(\gamma_{\zeta}\mu_{1}) & \cdots & \Tr(\gamma_{\zeta}\mu_{\zeta})\end{pmatrix}.$$
          Let $\bfM$ denote this last matrix. The previous remark implies $\Prob(\CC \subseteq \CC_{pub}) \ge \Prob(\bfM \text{ is non singular })$, therefore it suffices to prove that $\bfM$ is non singular with overwhelming probability over the choice of $\gamma_{1},\dots,\gamma_{\zeta}$.

          Let
          \[ \Gamma \eqdef \span(\gamma_{1}, \dots, \gamma_{\zeta}) \quad {\rm and} \quad \MM \eqdef \span(\mu_{1},\dots,\mu_{\zeta}).\] Then, $\bfM$ is singular if and only if $\Gamma \cap \MM^{\perp} \neq \{0\}$. 
          Since $\Gamma$ and $\MM$ have the same dimension $\zeta$ over $\Fqm$, $\Gamma \cap \MM^{\perp} = \{0\}$ if and only if $\Gamma \oplus \MM^{\perp} = \Fqmu$. Therefore,

          $$\Prob(\bfM \text{ is non singular }) = \dfrac{\#\{\Gamma \mid \MM^{\perp} \oplus \Gamma = \Fqmu\}}{\#\{\Gamma \mid \dim_{\Fqm}(\Gamma)=\zeta\}}\cdot$$

          Recall the Gaussian binomial coefficient
          $\gbinom{u}{\zeta}_{q^{m}}$ denotes the number of
          $\Fqm$--linear subspaces of dimension $\zeta$ in an $\Fqm$--vector
          space of dimension $u$. Applying
          Lemma~\ref{lemma:denombrement}, we have
          $$\Prob(\bfM \text{ is non singular}) = \dfrac{q^{m\zeta(u-\zeta)}}{\gbinom{u}{\zeta}_{q^{m}}}  \ge \left(1 - \dfrac{1}{q^{m}}\right)^{\dfrac{q^{m}}{q^{m}-1}},$$
          where the inequality on the right--hand side can be found
          for instance in \cite[Appendix~A]{CC19}. This yields Theorem~\ref{thm:prob_C_eq_Cpub}.
\end{proof}
}

Set
\if\longversion1
\[
  \else
  \(
  \fi
  \TT \eqdef \bigoplus_{i=1}^{\zeta}\span_{\Fqm}\left\{ \Tr(\gamma_{i}\kpub)\right\}.
  \if\longversion1
\]
\else
\)
\fi
By interpolation, it can be regarded as a subspace of $\qpoly{<m}$, and $\CC_{pub} = \qpoly{<k} \oplus \TT$. In order to remove the error $\bfe$ we just need to decode in this public supercode. Notice that $$\dim(\qpoly{\le t}\circ\TT) \le \zeta(t+1).$$ Therefore, using the algorithm of Section \ref{sec:supercode}, one can expect to decode in $\CC_{pub}$ whenever
\begin{equation}\label{eq:parameters_to_break_liga}
k+2t+\zeta(t+1) \le n.
\end{equation}

Table~\ref{tab:table_liga} compares \eqref{eq:parameters_to_break_liga} with the proposed parameters for \LIGA{} in \cite[Section~7]{RPW20}. As observed, Inequality~(\ref{eq:parameters_to_break_liga}) is satisfied for
any proposed parameter set. Moreover, if one tries to increase $\zeta$ in order to avoid this attack, one also needs to increase $w$ to resist the key recovery attack from \cite{GOT18}, which decreases $t \eqdef \lfloor \frac{n-k-w}{2} \rfloor$ that must be greater than 1.

\begin{table}[!h]
  \begin{center}
  \begin{tabular}{|c||c|c|c|c|c||c|}
  \hline
    \ \ \textbf{Name}\ \ & \ \ $n$ \ \ & \ \ $k$ \ \ & \ \ $t$ \ \ & \ \ $\zeta$ \ \ & \ \ Security (bits)\ \ & \ \ $k+2t+\zeta(t+1)$ \ \ \\
  \hline \hline
  \LIGA-128 & 92  & 53 & 6  & 2 & 128 & 79  \\
  \LIGA-192 & 120 & 69 & 8  & 2 & 192 & 103 \\
  \LIGA-256 & 148 & 85 & 10 & 2 & 256 & 127 \\
  \hline
\end{tabular}
\medskip
\caption{\label{tab:table_liga} This table compares the values of the
  formula~(\ref{eq:parameters_to_break_liga}) with the parameters proposed for
  \LIGA{}.}
\end{center}
\end{table}

Step 1 is summed up in Proposition \ref{prop:sum_up_step_1_liga}.

\begin{proposition}\label{prop:sum_up_step_1_liga}
  If $\bfc = \bfm\cdot\bfG + \Tr(\alpha\kpub) + \bfe$ is the encryption of a plaintext $\bfm$, then we can recover the support of the error $\bfe$ and the corrupted codeword $\bfm\cdot\bfG + \Tr(\alpha\kpub)$ in polynomial time using only the knowledge of the public key.
\end{proposition}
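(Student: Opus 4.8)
The plan is to assemble the ingredients developed above into a single decoding procedure and to check that each step runs in polynomial time. First I would build, from the public key alone, the code $\Cpub$ of Theorem~\ref{thm:prob_C_eq_Cpub}: sample $\zeta$ elements $\gamma_1, \dots, \gamma_\zeta \in \Fqmu$ that are $\Fqm$--linearly independent and form the code generated by the public Gabidulin code $\Gab{k}{\bfg}$ (available through $\bfG$) together with the vectors $\Tr(\gamma_i \kpub)$. By Theorem~\ref{thm:prob_C_eq_Cpub}, with probability $1 - e^{O(1/q^m)}$ one has $\Cpub = \CC$, so that the ciphertext $\bfc$ is a codeword of $\Cpub$ corrupted by the rank--$t$ error $\bfe$, with $t = t_{pub} = \lfloor \frac{n-k-w}{2}\rfloor$. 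Writing $\Cpub = \qpoly{<k} \oplus \TT$ with $\TT = \bigoplus_{i=1}^\zeta \span_{\Fqm}\{\Tr(\gamma_i \kpub)\}$ places us exactly in the framework of Section~\ref{sec:supercode}.

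Next I would run the supercode decoder of Section~\ref{sec:supercode} on $\bfc$ inside $\Cpub$. Its success relies on the hypothesis of Lemma~\ref{lem:super_code_decoding_radius}, namely $(\qpoly{<k+t} + \qpoly{\le t}\circ \TT) \cap (\qpoly{\le t}\circ E) = \{0\}$, where $E$ is the $q$--polynomial representing $\bfe$. This is ensured in the typical case by the dimension count \eqref{eq:parameters_to_break_liga}: since $\dim(\qpoly{\le t}\circ \TT) \le \zeta(t+1)$, the inequality $k + 2t + \zeta(t+1) \le n$ keeps the sum of dimensions below that of the ambient space, and by Table~\ref{tab:table_liga} it holds for every proposed parameter set. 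Under this condition, Lemma~\ref{lem:super_code_decoding_radius} guarantees that any nonzero solution $(\Lambda, N)$ of the linearised system~(\ref{item:supercode_system}) satisfies $\Lambda \circ E = 0$.

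It then remains to extract the desired objects. The right kernel of $\Lambda$ contains the image of $E$, hence the column support of $\bfe$; once this support is known, $\bfe$ itself is recovered by solving a linear system as in \cite[\S 3]{GRS13} or \cite[\S 1.4]{AGHT17}. Subtracting $\bfe$ from $\bfc$ yields the corrupted codeword $\bfm\cdot\bfG + \Tr(\alpha\kpub)$. Each step --- sampling the $\gamma_i$, computing a generating set of $\Cpub$, solving the $\Fqm$--linear system~(\ref{item:supercode_system}), computing the kernel of $\Lambda$, and solving for $\bfe$ --- reduces to linear algebra over $\Fqm$ (or $\Fqmu$) of size polynomial in $n$ and $m$, so the whole procedure is polynomial time.

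The main obstacle is not algorithmic but concerns the success guarantee. Two facts must hold: the identity $\Cpub = \CC$, which Theorem~\ref{thm:prob_C_eq_Cpub} establishes with overwhelming probability over the choice of the $\gamma_i$, and the genericity hypothesis of Lemma~\ref{lem:super_code_decoding_radius}. The latter is the delicate point, since \eqref{eq:parameters_to_break_liga} only makes the intersection \emph{expected} to be trivial and does not prove it for the specific error $\bfe$ at hand. I would therefore state the proposition as holding for almost all instances, supported by the experiments of the accompanying implementation, rather than attempt to prove that $(\qpoly{<k+t} + \qpoly{\le t}\circ \TT) \cap (\qpoly{\le t}\circ E)$ vanishes unconditionally, which would require controlling the minimum distance of $\qpoly{<k+t} + \qpoly{\le t}\circ \TT$ --- a quantity the paper explicitly notes is hard to estimate.
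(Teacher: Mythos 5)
Your proposal is correct and follows essentially the same route as the paper, which presents this proposition as a summary of the preceding Step~1 discussion: build $\Cpub$ from the public key via Theorem~\ref{thm:prob_C_eq_Cpub}, decode the ciphertext in this supercode using Lemma~\ref{lem:super_code_decoding_radius} under the dimension condition~(\ref{eq:parameters_to_break_liga}), and recover $\bfe$ from the kernel of $\Lambda$ by linear algebra. Your explicit caveat that the genericity hypothesis of Lemma~\ref{lem:super_code_decoding_radius} is only heuristically justified matches the paper's own acknowledgement that the claim holds for typical instances rather than unconditionally.
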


\subsubsection{Step 2: Remove the $\bfz$ dependency.}
From now on, since we got rid of the small error term $\bfe$,
we can do as if the ciphertext was
\begin{equation}
  \begin{aligned}
    \bfc' & \eqdef \bfm\cdot\bfG + \Tr(\alpha\kpub) \\
         & = (\bfm + \Tr(\alpha\bfx))\cdot\bfG + \Tr(\alpha\bfz).
  \end{aligned}
\end{equation}
This is a codeword of a Gabidulin code $\GG \eqdef \Gab{k}{\bfg}$, corrupted by an error of rank $w > \lfloor \frac{n-k}{2}\rfloor$. Hence, we cannot decode in $\GG$ to recover the plaintext. However, thanks to the knowledge of the public key, one can easily recover the affine space
\[
  A \eqdef \{ \beta\in\Fqmu \mid \bfc' - \Tr(\beta\kpub) \in \GG\}
\]
using linear algebra.

\begin{lemma}
  Let $\beta\in\Fqmu$. Then $\bfc' - \Tr(\beta\kpub)\in \GG$ if and only if\\
  $\Tr((\alpha-\beta)\bfz)~=~0$.
\end{lemma}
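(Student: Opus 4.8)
The plan is to prove the equivalence directly by unwinding the two definitions. Recall that $\bfc' = (\bfm + \Tr(\alpha\bfx))\cdot\bfG + \Tr(\alpha\bfz)$ and that $\kpub = \bfx\cdot\bfG + \bfz$, so that $\Tr(\beta\kpub) = \Tr(\beta\bfx)\cdot\bfG + \Tr(\beta\bfz)$ since the trace $\Tr = \Tr_{q^m u/q^m}$ is $\Fqm$--linear and $\bfG$ has entries in $\Fqm$. First I would subtract these two expressions to compute
\[
  \bfc' - \Tr(\beta\kpub) = \bigl(\bfm + \Tr(\alpha\bfx) - \Tr(\beta\bfx)\bigr)\cdot\bfG + \Tr(\alpha\bfz) - \Tr(\beta\bfz),
\]
which, again by linearity of the trace, rewrites as
\[
  \bfc' - \Tr(\beta\kpub) = \bigl(\bfm + \Tr((\alpha-\beta)\bfx)\bigr)\cdot\bfG + \Tr((\alpha-\beta)\bfz).
\]

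The key observation is that the first summand, $\bigl(\bfm + \Tr((\alpha-\beta)\bfx)\bigr)\cdot\bfG$, is by definition a codeword of $\GG = \Gab{k}{\bfg}$, since $\bfm \in \Fqm^k$ and $\Tr((\alpha-\beta)\bfx) \in \Fqm^k$. Therefore $\bfc' - \Tr(\beta\kpub) \in \GG$ holds if and only if the remaining summand $\Tr((\alpha-\beta)\bfz)$ lies in $\GG$. This reduces the claim to showing that $\Tr((\alpha-\beta)\bfz) \in \GG$ precisely when $\Tr((\alpha-\beta)\bfz) = 0$, i.e.\ that the only way for this error term to be a Gabidulin codeword is for it to vanish entirely.

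The direction $(\Leftarrow)$ is immediate, as $\mathbf{0} \in \GG$. For the forward direction, the argument rests on a weight/distance consideration: the vector $\Tr((\alpha-\beta)\bfz)$ has small $\Fq$--rank — it is a trace of $(\alpha-\beta)\bfz$ where $\bfz$ has $\Fq$--rank $w$, so its rank is at most $w$. Since $w > \lfloor\tfrac{n-k}{2}\rfloor$ is by hypothesis below the minimum distance $n-k+1$ of the Gabidulin code $\GG$ (indeed $w < n-k$ by the parameter constraints), a nonzero codeword of $\GG$ cannot have rank as small as that of a generic such trace; more directly, one argues that if $\Tr((\alpha-\beta)\bfz)$ were a nonzero codeword it would contradict the structure forced by the secret key. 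I expect the main obstacle to be making this last step fully rigorous: one must verify that $\Tr((\alpha-\beta)\bfz)$ genuinely has rank strictly smaller than $n-k+1$ whenever it is nonzero, which follows because $\rk(\Tr((\alpha-\beta)\bfz)) \le \rk(\bfz) = w$ and $w \le n-k < n-k+1$, so no nonzero codeword of $\GG$ can equal it. Hence $\Tr((\alpha-\beta)\bfz) \in \GG$ forces $\Tr((\alpha-\beta)\bfz) = 0$, completing the equivalence.
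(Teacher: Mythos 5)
Your proof is correct and follows essentially the same route as the paper: the same decomposition $\bfc' - \Tr(\beta\kpub) = (\bfm + \Tr((\alpha-\beta)\bfx))\bfG + \Tr((\alpha-\beta)\bfz)$ followed by the observation that a vector of rank at most $w < n-k$ can only lie in $\GG$ if it is zero. The one step you assert without justification, namely $\rk(\Tr(\lambda\bfz)) \le \rk(\bfz)$, is the one point where the paper supplies an argument (via row supports: $\RSupp(\lambda\bfz) = \RSupp(\bfz)$ and the trace cannot increase the rank), though it also follows directly from the $\Fq$--linearity of $\Tr$ applied to the column support.
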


\begin{proof}
  Note that for any $\lambda\in\Fqmu$, $$\rk(\Tr(\lambda \bfz)) \le \rk(\bfz) = w < n-k.$$ Indeed, let $\BB$ be a basis of the extension field $\Fqmu/\Fq$. Then, if $\lambda \neq 0$, the extension of $\lambda\bfz$ in $\BB$ is the extension of $\bfz$ in the basis $\lambda\BB$. Therefore, $$\RSupp(\lambda\bfz) = \RSupp(\bfz)$$ and the trace cannot increase the rank.

  \noindent Let $\beta\in\Fqmu$. Then $$\bfc' - \Tr(\beta\kpub) = (\bfm + \Tr((\alpha-\beta)\bfx))\bfG + \Tr((\alpha-\beta)\bfz).$$ Therefore, $\beta\in A$ if and only if $\Tr((\alpha-\beta)\bfz)\in\GG$. Since it has rank weight less than the minimum distance of $\GG$, it follows that $\Tr((\alpha-\beta)\bfz) = 0$.
\end{proof}

\begin{lemma}
  Let $\EE\eqdef \displaystyle \bigcap_{i=1}^{\zeta}\langle \mu_{i} \rangle^{\perp}$. Then $A$ is the affine space $\alpha + \EE$.
\end{lemma}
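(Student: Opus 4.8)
The plan is to reduce the defining condition of $A$ to the trace equation supplied by the preceding lemma, and then to unfold that equation using the decomposition $\bfz = \sum_{i=1}^{\zeta}\mu_{i}\bfz_{i}$ into the $\Fqm$--linearly independent family $(\bfz_1,\dots,\bfz_\zeta)$. First I would fix $\beta \in \Fqmu$ and set $\lambda \eqdef \alpha - \beta$. By the previous lemma, $\bfc' - \Tr(\beta\kpub) \in \GG$ if and only if $\Tr((\alpha-\beta)\bfz) = 0$, so that $\beta \in A$ is equivalent to $\Tr(\lambda\bfz) = 0$. Thus it suffices to show that $\Tr(\lambda\bfz) = 0$ holds precisely when $\lambda \in \EE$.

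Next I would expand the trace along the decomposition of $\bfz$. Since $\Tr = \Tr_{q^{m}u/q^{m}}$ is $\Fqm$--linear and each $\bfz_i$ lies in $\Fqm^n$, we get
\[
  \Tr(\lambda\bfz) = \Tr\Big(\sum_{i=1}^{\zeta}(\lambda\mu_i)\bfz_i\Big)
  = \sum_{i=1}^{\zeta}\Tr(\lambda\mu_i)\,\bfz_i,
\]
where each coefficient $\Tr(\lambda\mu_i)$ lies in $\Fqm$. Because the vectors $\bfz_1,\dots,\bfz_\zeta$ are linearly independent over $\Fqm$, this $\Fqm$--linear combination vanishes if and only if every coefficient vanishes, i.e. $\Tr(\lambda\mu_i) = 0$ for all $i \in \{1,\dots,\zeta\}$. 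I would then reinterpret each scalar equation as an orthogonality relation for the nondegenerate bilinear form $(x,y) \mapsto \Tr(xy)$ on $\Fqmu$ viewed as an $\Fqm$--vector space: $\Tr(\lambda\mu_i) = 0$ means exactly $\lambda \in \langle \mu_i\rangle^{\perp}$.

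Collecting these, $\Tr(\lambda\bfz) = 0$ if and only if $\lambda \in \bigcap_{i=1}^{\zeta}\langle\mu_i\rangle^{\perp} = \EE$. Substituting back $\lambda = \alpha - \beta$, we obtain that $\beta \in A$ if and only if $\alpha - \beta \in \EE$; since $\EE$ is an $\Fqm$--subspace we have $-\EE = \EE$, whence $\beta \in \alpha + \EE$. This proves $A = \alpha + \EE$. The only point requiring care is bookkeeping of the two nested extensions: one must verify that the scalars $\Tr(\lambda\mu_i)$ genuinely land in $\Fqm$ and that the relevant independence of the $\bfz_i$ is over $\Fqm$ (not $\Fq$), so that the splitting of the single vector equation into $\zeta$ scalar conditions is legitimate; everything else is a direct computation.
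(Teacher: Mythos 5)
Your proof is correct and follows essentially the same route as the paper's: reduce membership in $A$ to the condition $\Tr((\alpha-\beta)\bfz)=0$ via the preceding lemma, expand $\bfz=\sum_i \mu_i\bfz_i$ using the $\Fqm$--linearity of the trace, and invoke the $\Fqm$--linear independence of the $\bfz_i$ to split into the $\zeta$ orthogonality conditions $\Tr((\alpha-\beta)\mu_i)=0$. Your version merely spells out the bookkeeping (the substitution $\lambda=\alpha-\beta$ and the identification of each scalar condition with $\langle\mu_i\rangle^{\perp}$) that the paper leaves implicit.
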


\begin{proof}
  $\beta\in A$ if and only if $$\Tr((\alpha - \beta)\bfz) = \sum_{i=1}^{\zeta}\Tr((\alpha-\beta)\mu_{i})\bfz_{i} = 0.$$ By the linear independence of the $\bfz_{i}$'s, it follows that $\Tr((\alpha-\beta)\mu_{i}) = 0$ for all $i$, \ie
  \if\longversion1
  \[
  \else
  \( 
  \fi
  A = \alpha + \bigcap_{i=1}^{\zeta}\langle \mu_{i} \rangle^{\perp}.
  \if\longversion1
  \]
  \else
  \)
  \fi
\end{proof}

We are now able to remove the $\bfz$ dependency in the ciphertext. Indeed, let $\FF \eqdef \{\Tr(\gamma\bfx) \mid \gamma\in\EE\}.$ The knowledge of $A$ gives finally access to the affine space $\bfm + \FF$.

\subsubsection{Step 3: Recover the plaintext.}
Denote by $f \eqdef \dim_{\Fqm} \FF$. Since $\FF$ is the image of $\EE$
by a surjective map, we have $f\leq \dim \EE =  u- \zeta \leq u-1$.
Let $\bfs$ be some random element of $\bfm+\FF$. Notice that from a description of the affine space $\bfm + \FF$ it is possible to recover a basis $(\bfe_{1}, \dots, \bfe_{f})$ of $\FF$. Then, $\bfs$ can be decomposed as
$$\bfs \eqdef \bfm + \displaystyle\sum_{i=1}^{f} \lambda_{i}\bfe_{i}$$ for some unknown coefficients $\lambda_{i}\in\Fqm$. Furthermore, recall that the last $u$ positions of $\bfm$ are $0$. Then, $\bfm$ is a solution of the following linear system of $k+f$ unknowns and $u+k$ equations:
\begin{equation}\label{eq:systeme_message_liga}
		\left\{
		\begin{array}{c}
			\bfm + \displaystyle\sum_{i=1}^{f} \lambda_{i}\bfe_{i} = \bfs \\
			\bfm_{k-u+1} = \dots = \bfm_{k} = 0
		\end{array}\right.
\end{equation}
Finally, the following lemma shows that $\bfm$ can be recovered from \emph{any} solution of (\ref{eq:systeme_message_liga}).

\begin{lemma}
  Let $(\bfm', \mathbf{\lambda'})$ be another solution of (\ref{eq:systeme_message_liga}). Then $\bfm' = \bfm$.
\end{lemma}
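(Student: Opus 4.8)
The plan is to show that the system~(\ref{eq:systeme_message_liga}) determines $\bfm$ uniquely even though the auxiliary coefficients $\lambda_i$ may not be determined. The key structural fact I would exploit is that both $\bfm$ and $\bfm'$ lie in the coset $\bfs + \FF$ and simultaneously have their last $u$ coordinates equal to zero. Concretely, if $(\bfm',\word{\lambda}')$ is another solution of~(\ref{eq:systeme_message_liga}), then subtracting the two instances of the first equation gives
\[
  \bfm - \bfm' = \sum_{i=1}^{f} (\lambda_i' - \lambda_i)\bfe_i \in \FF,
\]
so $\bfm - \bfm'$ is an element of $\FF$ whose last $u$ coordinates vanish (since both $\bfm$ and $\bfm'$ satisfy the second set of equations $\bfm_{k-u+1} = \cdots = \bfm_k = 0$). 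Thus the entire argument reduces to proving that the only element of $\FF$ with vanishing last $u$ coordinates is $\word{0}$.

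To establish this, I would unwind the definition $\FF = \{\Tr(\gamma\bfx) \mid \gamma\in\EE\}$. An element of $\FF$ with zero last $u$ coordinates is of the form $\Tr(\gamma\bfx)$ with its last $u$ entries equal to zero, i.e.\ $\Tr(\gamma x_j) = 0$ for $j \in \{k-u+1,\dots,k\}$. Recall from the key generation (Algorithm~\ref{keygenLIGA}, and the decryption step of Section~\ref{sec:FL_and_LIGA}) that the last $u$ entries $(x_{k-u+1},\dots,x_k)$ of $\bfx$ form an $\Fqm$--basis of $\Fqmu$. Since the bilinear form $(x,y)\mapsto \Tr(xy)$ is non-degenerate, the conditions $\Tr(\gamma x_j) = 0$ for all $j$ in this range force $\gamma$ to be orthogonal to a spanning set of $\Fqmu$, hence $\gamma = 0$. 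Consequently $\Tr(\gamma\bfx) = \word{0}$, which is exactly what we need.

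Combining the two steps, $\bfm - \bfm'$ is simultaneously an element of $\FF$ and has vanishing last $u$ coordinates, so by the above it must be zero, giving $\bfm' = \bfm$. The main obstacle, and the crux of the proof, is the injectivity-type statement in the second paragraph: one must correctly identify that the basis property of the last $u$ entries of $\bfx$ together with the non-degeneracy of the trace form is precisely the ingredient that rules out nonzero elements of $\FF$ supported away from the last $u$ positions. This is the same design feature that guarantees decryption works for the legitimate receiver in Section~\ref{sec:FL_and_LIGA}, so I expect the proof to mirror that uniqueness-of-decoding argument rather than requiring any new machinery.
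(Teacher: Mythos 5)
Your proof is correct and follows essentially the same route as the paper's: both reduce to showing that the only element of $\FF$ with vanishing last $u$ coordinates is zero, using that $(x_{k-u+1},\dots,x_k)$ is an $\Fqm$--basis of $\Fqmu$ and that the trace form is non-degenerate (the paper phrases this via the dual basis, you via orthogonality to a spanning set, which is the same argument). No gaps.
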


\begin{proof}
  Since
  $\bfm - \bfm' = \sum_{i=1}^{f}(\lambda_{i}'-\lambda_{i})\bfe_{i}
  \in \FF$, it is of the form $\Tr(\gamma\bfx)$ for some
  $\gamma \in \EE$. Moreover, its last $u$ positions are $0$. Recall
  that $(\bfx_{k-u+1},\dots,\bfx_{k})$ is a basis of
  $\Fqmu/\Fqm$. Then, the last $u$ positions of $\Tr(\gamma\bfx)$ are
  the coefficients of $\gamma$ in the dual basis
  $\{\bfx^{\ast}_{k-u+1},\dots,\bfx_{k}^{\ast}\}$. Hence, $\gamma = 0$
  and $\bfm = \bfm'$.
\end{proof}

\subsubsection{Summary of the attack}
\begin{itemize}
  \item Decode in a public supercode of a Gabidulin code to get rid of the small error $\bfe$ and recover $\bfc' = \bfm\bfG + \Tr(\alpha\kpub)$.
  \item Using linear algebra, deduce the affine space $$A = \{ \beta\in\Fqmu \mid \bfc' - \Tr(\beta\kpub)~\in~\GG\}.$$
  \item Recover the affine space $\bfm + \FF$ where $\FF = \{ \Tr(\gamma\bfx) \mid \alpha + \gamma \in A \}$.
  \item Deduce a basis of $\FF$.
  \item Solve linear system (\ref{eq:systeme_message_liga}) to recover
    the plaintext $\bfm$.
\end{itemize}

\subsubsection{Implementation.}
Tests have been done using SageMath v9.2 \cite{sage9.2} on an
Intel$^{\text{\textregistered}}$ Core™ i5-10310U CPU. We are able to
recover the plaintext on the three \LIGA{} proposals. The average
running times are listed in Table \ref{table:benchmark_liga}. Our
implementation is available on Github
\url{https://github.com/mbombar/Attack_on_LIGA}.

\begin{table}[ht]
  \centering
  \begin{tabular}{|c||c|c|c|}
  \hline
  \textbf{Name} & \begin{tabular}{@{}c@{}}
          \textbf{Parameters} \\ $\mathbf{(q, n, m, k, w, u, \zeta)}$
        \end{tabular}  & \begin{tabular}{@{}c@{}}
          \textbf{Claimed security} \\ \textbf{level}
        \end{tabular} & \begin{tabular}{@{}c@{}}
          \textbf{Average running time}
        \end{tabular} \\
  \hline \hline
  \LIGA-128 & $(2, 92, 92, 53, 27, 5, 2)$ & 128 bits & 8 minutes \\
  \hline
  \LIGA-192 & $(2, 120, 120, 69, 35, 5, 2)$ & 192 bits & 27 minutes \\
  \hline
  \LIGA-256 & $(2, 148, 148, 85, 43, 5, 2)$ & 256 bits & 92 minutes \\
  \hline
  \end{tabular}
  \medskip
\caption{\label{table:benchmark_liga} Average running times for the attack on \LIGA.}
\end{table}

\subsubsection{Acknowledgements.} The second author is partially funded by the ANR project 17-CE39-0007
{\em CBCrypt}.

\appendix
\section{Further Details About RAMESSES'
  Specifications}\label{app:ramesses}
As explained in Section~\ref{subsec:ramesses}, our presentation of
\RAMESSES{} may seem to differ from the original proposal
\cite{LLP20}. Indeed in Section~\ref{subsec:ramesses}, we present the
scheme using only $q$--polynomials, while the original publication
prefers using matrices and vectors.
The purpose of the present appendix is to prove that our way to
present \RAMESSES{} is equivalent to that of \cite{LLP20}.

\medskip

\noindent {\bf Caution.} In the present article we use $q$ to denote
the cardinality of the ground field $\Fq$. In \cite{LLP20} the ground
field is always supposed to be $\F_2$ and $q$ refers to some power of
$2$, {\ie} $q = 2^n$ for some positive $n$.  Moreover, the exponent of
$q$ is denoted $n$ while it is denoted $m$ in the present article.
This might be confusing while reading both papers in parallel.

The other notations $w, k, \ell, t$ are the same in the two articles.
Finally, in \cite{LLP20} a public $\Fq$--basis $\bfg = (g_1, \dots, g_m)$
of $\Fqm$ is fixed once for all. Our presentation does not require
such a setting.

\subsection{Key Generation.}
Recall that \cite{LLP20} fixes a vector $\bfg \in \Fqm^m$ of rank $m$
({\ie} an $\Fq$--basis of $\Fqm$). This data together
with a parity--check matrix $\bfH$ of the Gabidulin code
$\Gab{k}{\bfg}$ are public.

\paragraph{Original presentation}
The key generation consists in picking a uniformly random
$\kpriv \in \Fqm^m$ of weight $w$ and the public key is its syndrome
$\kpub \eqdef \bfH \kpriv^\top$ with respect to the public Gabidulin
code.

\paragraph{Our presentation}
Since the code $\Gab{k}{\bfg}$ is public,
any of its elements may be associated to an element of
$\qpoly{<k}$. The transition from codewords to $q$--polynomials is
nothing but interpolation.
Similarly, the choice of a vector $\kpriv \in \Fqm^m$ of rank $w$ is
(again by interpolation) equivalent to that of a $q$--polynomial $K$
of rank $w$.  Finally, publishing its syndrome $\bfH \kpriv^{\top}$ is
equivalent to publish the coset $\kpriv + \Gab{k}{\bfg}$, which in our
setting is nothing but publishing the affine space
$\Ksec + \qpoly{<k}$.

\subsection{Encryption}

\paragraph{Original presentation}
The plain text is encoded into a matrix $\bfP \in \Fq^{m \times m}$
in row echelon form and of rank $t$.
\begin{itemize}
\item Compute $\bfy \in \Fqm^m$ such that $\bfH \bfy^\top = \kpub$
\item Pick a uniformly random $\bfT \in \Fq^{m \times m}$ of
  $\bfg$--degree $\ell$ \ie{} representing a $q$--polynomial of
  $q$--degree $\ell$ in the basis $\bfg$;
\item Pick a uniformly random \(\bfS \in \GL{m}{\Fq}\).
\end{itemize}
The ciphertext is
\[
  \bfu^{\top} \eqdef \bfH (\bfy \bfT + \bfg \bfS \bfP)^{\top}.
\]

\paragraph{Our presentation}
The vector $\bfu$ is a syndrome of any word of the form:
\[
  \bfy \bfT + \bfg \bfS \bfP + \bfc,
\]
where $\bfc$ ranges over $\Gab{k}{\bfg}$.
From a $q$--polynomial point of view, such a word corresponds to:
\[
  (\Ksec + C_0) \circ T + G \circ S \circ P + C,
\]
where
\begin{itemize}
\item $C, C_0$ are arbitrary elements of $\qpoly{<k}$;
\item $T \in \qpoly{\ell}$ is the interpolating polynomial of $\bfT$;
\item and $G, S, P$ are the respective interpolating polynomials of $\bfg, \bfS, \bfP$.
\end{itemize}
Note that, since $\bfg$ has rank $m$ and $\bfS$ is supposed to be
nonsingular, then their interpolating polynomials are invertible in
$\qpoly{}/( X^{q^m} - X )$. Hence, setting $E \eqdef G \circ S \circ P$, we
get a $q$--polynomial whose matrix representation in basis $\bfg$ has
the same row space as the matrix representation of $P$.  Thus, we get
the ciphertext description in (\ref{eq:ramesses_ciphertext}).

\subsection{Decryption}

\paragraph{Original presentation}
Start by computing $\bfx \in \Fqm^n$ such that
$\bfH \bfx^\top = \bfu^\top$.  Next, knowing $\kpriv$, one can compute
an annihilator polynomial $V_{\kpriv} \in \qpoly{w}$ of the support of
$\kpriv$.  Then, compute
$\bfz \eqdef V_{\kpriv}(\bfx) = (V_{\kpriv}(x_1), \dots,
V_{\kpriv}(x_m))$ and decode $\bfz$ as a corrupted codeword of
$\Gab{k+\ell + w}{\bfg}$.  If succeeds, it returns an error vector
$\bfa$. If its rank equals $t$, then the row echelon form of
$\ext{\bfa}{\bfg}$ yields $\bfP$.

\paragraph{Our presentation}
Similarly, the approach is based on applying $V_{\kpriv}$ and
performing Gabidulin codes decoding.  Indeed, starting from ciphertext
(\ref{eq:ramesses_ciphertext}), we apply $V = V_{\kpriv}$ and get
\[
  V \circ Y \equiv V \circ C_1 + V \circ E
\]
and a decoding procedure returns $V \circ E$. If this $q$--polynomial
has rank $t$, then the row echelon form of its matrix representation
yields the plaintext $\bfP$.

\if\longversion0
\section{Detailed Presentation of the
  Right--Hand Side Version of the Decoding of Gabidulin Codes}\label{sec:appendix_decoding}
In this appendix, we provide a detailed and self--contained version of
the alternative decoder for Gabidulin codes presented in
Section~\ref{subsec:right_decoding}.

Starting from the decoding problem $Y = C+E$, the decoding problem can be
thought as finding the $q$--polynomial $C$, given $Y$. Using the
analogy with Reed--Solomon codes, Loidreau introduced in \cite{L06a} a
Welch--Berlekamp like algorithm to decode Gabidulin codes that
consists in finding the unique $q$--polynomial $V$ of $q$--degree less
than or equal to $t$ such that $V$ vanishes on the column support of
$\bfe$, which is equivalent to $V\circ E=0$, \ie $V$ is a left
annihilator of the error. Using a linearisation technique, this leads
to the resolution of a linear system that can be efficiently solved
provided that $t$ is less than half the minimum distance. It then
suffices to compute a left Euclidean division to recover $C$ and
therefore the codeword $\bfc$.

The core of the algorithm to follow consists in searching a
right--hand side annihilator of $E$ instead of a left--hand side one.  Due
to the non commutativity of the ring $\qpoly{}$, working on the
right--hand side is not directly equivalent to working on the left--hand side.

We begin to state the existence of a right--hand side annihilator.

\begin{proposition} Let $E$ be a $q$--polynomial of rank $t$. Then there
exists a unique monic $q$--polynomial $V$ with $\qdeg(V)\le t$ such that $E\circ V=0$ modulo
$(X^{q^m}-X)$.
\end{proposition}

\begin{proof} Let $Q \eqdef \sum_{i=0}^{t}a_{i}X^{q^{i}}$ be the unique monic $q$--polynomial of $q$--degree less than or equal to
  $t$ that vanishes exactly on $\img(\adj{E})$, \ie $\img(\adj{E}) = \ker Q$. Such a
  $q$--polynomial is guaranteed to exist (see for instance \cite{B15a} or \cite{O33}.) It follows that $\ker(E)=\img(\adj{Q})$.
  Moreover, $$\adj{Q} = \sum_{i=0}^{t}a_{i}^{q^{m-i}}X^{{q^{m-i}}} =
\left(\sum_{i=0}^{t}a_{t-i}^{q^{m-t+i}}X^{q^{i}}\right) \circ
X^{q^{m-t}}.$$

Let $V$ be the leftmost factor of $\adj{Q}$ in the above
decomposition. It is a $q$--polynomial of $q$--degree $t$, and
$E\circ\adj{Q}=0$ leads to $E\circ V\circ X^{q^{m-t}}=0$. Since
$X^{q^{m-t}}$ is invertible in $\qpoly{}/( X^{q^m} - X )$,
we get $E\circ V=0 \mod (X^{q^m}-X)$.
\end{proof}

The goal is to compute this right--hand side annihilator $V$. It
satisfies
\begin{equation}\label{eq:right_wb} Y\circ V = C\circ V + E\circ V \equiv
C\circ V \mod (X^{q^{m}}-X).
\end{equation}
Equation
\eqref{eq:right_wb} leads to a non linear system of $n$ equations
whose variables are the $t+k+1$ unknown coefficients of $C$ and $V$.

\begin{equation}\label{eq:right_wb_nl_system} \left\lbrace
	\begin{array}{l} (Y\circ V)(g_{i}) = C\circ V(g_{i}) \\ \qdeg V
\le t \\ \qdeg C \le k-1.
	\end{array} \right.
\end{equation}

\noindent Due to the non linearity, it is not clear how this can efficiently be
solved. That is why we consider instead the following linearised
system
\begin{equation}\label{eq:right_wb_sl_system} \left\lbrace
	\begin{array}{l} (Y\circ V)(g_{i}) = N(g_{i}) \\ \qdeg V \le t \\
\qdeg N \le k + t - 1,
	\end{array} \right.
\end{equation}
whose unknowns are the
$k+2t+1$ coefficients of $N$ and $V$. The latter is \emph{a priori}
more general than the former. But we can link the set of solutions of
the two systems. This is specified in the following two propositions.

\begin{proposition} Any solution $(V, C)$ of \eqref{eq:right_wb_nl_system}
gives a solution $(V, N=C\circ V)$ of \eqref{eq:right_wb_sl_system}.
\end{proposition}

\begin{proof}
This is the direct analogue of \cite[Proposition~1]{L06a}.
\end{proof}

\begin{proposition} Assume that $E$ is of rank $t\le \lfloor
\frac{n-k}{2}\rfloor$. If $(V, N)$ is a nonzero solution of
\eqref{eq:right_wb_sl_system} then $N = C\circ V$ where $C = Y-E$ is
the interpolating $q$--polynomial of the codeword.
\end{proposition}

\begin{proof} Let $(V, N) \neq (0, 0)$ be a solution of
\eqref{eq:right_wb_sl_system}, and let $C$ be the $q$--polynomial of
$q$--degree strictly less than $k$ that interpolates the codeword. Let
$R \eqdef N - C\circ V$. It is a $q$--polynomial, of $q$--degree at
most $k-1+t$. Assume that $R \neq 0$. Then,
\[
  (Y-C)\circ V = Y\circ V - C\circ V = N - C\circ V \equiv R \mod (X^{q^m}-X)
\]
  \noindent \ie
  \begin{equation}\label{eq:proof_eq_systems} E\circ V \equiv R \mod (X^{q^m}-X).
  \end{equation}
  \noindent Hence, $\rk(R) \le \rk(E) \le t$. Since $R\neq 0$,
  $\qdeg R \ge \dim\ker R$. Therefore, by the rank--nullity theorem,
  \[ n = \dim\ker R + \rk(R) \le \qdeg R + \rk(R) \le k-1+2t \le n-1 <
n
  \] which is a contradiction. Therefore, $R$ must be zero, \ie $N = C\circ V$.
\end{proof}

Thenceforth, whenever $t\le\lfloor\frac{n-k}{2}\rfloor$, any non zero
solution of \eqref{eq:right_wb_sl_system} allows to recover the
codeword by simply computing a right--hand side Euclidean division, which can be
done efficiently (see \cite{O33}). The decoding process boils down to
solving the system of equations
\eqref{eq:right_wb_sl_system}. However, despite the transformation,
the system is only semi-linear over $\Fqm$. To address this issue, we will again use
the adjoint of a (class of) $q$--polynomial. Let
$\wadj{y_{i}}\eqdef \adj{Y}(g_{i})$,
for all $i=1,\dots, n$. Using the anticommutativity of the adjoint
operator, system \eqref{eq:right_wb_sl_system} is equivalent to

\begin{equation}\label{eq:right_wb_linear_system}
\adj{V}(\wadj{y_{i}}) = \adj{N}(g_{i}) \mbox{ for } i=1,\dots,n.
\end{equation}

\noindent which is now an $\Fqm$--linear system of $n$ equations
whose unknowns are the coefficients of $\adj{V}$ and $\adj{N}$, that
are in explicit one-to-one correspondence with the coefficients of $V$
and $N$.

\begin{algorithm}\label{algo:right_welch_berlekamp}
	\caption{Right--hand side variant of Welch--Berlekamp}
\DontPrintSemicolon \KwIn{$q$ a prime power, $k,n,m$ integers,
$\bfg=(g_{1},\dots,g_{n})$ a basis of $\Fqm/\Fq$, $\CC$ a Gabidulin
code of dimension $k$ and support $\bfg$, $t\le
\lfloor\frac{n-k}{2}\rfloor$ an integer, $\bfy\in\Fqm^{n}$.}
\KwOut{$\bfc\in\CC$ such that $\bfy=\bfc+\bfe$ for some
$\bfe\in\Fqm^{n}$ with $\rk(\bfe)\le t$.}  Find $Y$ the
$q$--polynomial of $q$--degree strictly less than $n$ such that
$Y(g_{i}) = y_{i}$\; Compute $\adj{Y}$ and evaluate in $\bfg$ to get
$\wadj{\bfy} \eqdef \adj{Y}(\bfg) \in\Fqm^{n}$\; Find a non zero solution
$(V_{0}, N_{0})$ of the linear system
\eqref{eq:right_wb_linear_system}\; Compute $V \eqdef \adj{V_{0}}$ and $N
\eqdef \adj{N_{0}}$\; Recover $C$ by computing the right--hand side
Euclidean division of $N$ by $V$\; \KwRet{$\bfc \eqdef C(\bfg)$}
\end{algorithm}

An implementation of this algorithm using SageMath v9.2 \cite{sage9.2} can be found on Github: \url{https://github.com/mbombar/Attack_on_LIGA}.

\if\longversion1
\begin{remark}
  This right--hand side algorithm can be generalised in order to
  decode an $u$--interleaved Gabidulin code (see for instance
  \cite{WZ14} for further reference about interleaved Gabidulin codes
  and their decoding algorithms). Indeed, let $\bfY = \bfC + \bfE$ be
  a corrupted codeword of an $u$--interleaved Gabidulin code, with
  $\bfE\in\Fqm^{u\times n}$ being an error matrix of $\Fq$-rank equal
  to $t$. Then, the rows of $\bfE$ as seen as vectors of $\Fqm^{n}$
  share a common \emph{row} support, namely the \emph{row} support of
  $\bfE$, of dimension $t$. Hence, they also share a common right
  annihilator, of $q$--degree at most $t$. The algorithm from \cite[\S
  4]{LO06} where the errors shared a common \emph{column} support can
  be adapted straightforwardly in this setting, and allows to decode
  almost all error matrix $\bfE$ of rank-weight
  $t\le \lfloor\frac{u}{u+1}(n-k)\rfloor$. This can be used in order
  to attack the original Faure--Loidreau cryptosystem in the same
  fashion as \cite[\S 3]{RPW20}.
\end{remark}
\fi

\fi

\bibliographystyle{splncs04}

\end{document}